\documentclass[10pt]{article}
\usepackage{amsfonts,amsmath,amssymb,amsthm}
\usepackage[hmargin=1.2in,vmargin=1.2in]{geometry}
\usepackage{natbib} \bibliographystyle{ecta}
\usepackage{setspace} \setstretch{1.35}
\usepackage{enumerate}
\usepackage[shortlabels]{enumitem}
\usepackage{url}
\usepackage{array}
\usepackage[toc,title,titletoc,header]{appendix}
\usepackage[colorlinks,citecolor=magenta]{hyperref}
\usepackage{float}
\usepackage{threeparttable}
\usepackage{adjustbox}
\usepackage{booktabs}
\usepackage{etoolbox} 
\usepackage{mathtools}

\renewcommand{\qed}{\rule{2mm}{2mm}}
\newcommand{\indep}{\perp \!\!\! \perp}
\DeclareMathOperator{\var}{Var}
\DeclareMathOperator{\cov}{Cov}

\setlength{\parskip}{1.5ex plus 0.5 ex minus 0.2 ex}
\allowdisplaybreaks

\newtheorem{theorem}{Theorem}[section]
\newtheorem{lemma}{Lemma}[section]
\newtheorem{corollary}{Corollary}[section]
\theoremstyle{definition}

\newtheorem{remark}{Remark}[section]
\newtheorem{assumption}{Assumption}[section]

\AtEndEnvironment{remark}{~\qed}
\AtEndEnvironment{example}{~\qed}

\begin{document}
\author{
Yuehao Bai\\ 
Department of Economics\\
University of Southern California\\
\href{mailto:yuehao.bai@usc.edu}{\texttt{yuehao.bai@usc.edu}}
\and
Meng Hsuan Hsieh\\
Ross School of Business\\
University of Michigan\\
\href{mailto:rexhsieh@umich.edu}{\texttt{rexhsieh@umich.edu}}
\and
Jizhou Liu\\
Booth School of Business\\
University of Chicago\\
\href{mailto:jliu32@chicagobooth.edu}{\texttt{jliu32@chicagobooth.edu}}
\and
Max Tabord-Meehan\\
Department of Economics\\
University of Chicago\\
\href{mailto:maxtm@uchicago.edu}{\texttt{maxtm@uchicago.edu}}
}


\title{Revisiting the Analysis of Matched-Pair and Stratified Experiments in the Presence of Attrition \thanks{We thank Rachel Glennerster, Hongchang Guo, David McKenzie, Azeem Shaikh, Alex Torgovitsky, Ed Vytlacil, and three anonymous refereees for helpful comments. We also thank  Lorenzo Casaburi and Tristan Reed for helpful comments and for sharing their data for one of our empirical applications. The fourth author acknowledges support from NSF grant SES-2149408.}}

\maketitle

\vspace{-0.3in}

\begin{spacing}{1.2}
\begin{abstract}
In this paper we revisit some common recommendations regarding the analysis of matched-pair and stratified experimental designs in the presence of attrition. Our main objective is to clarify a number of well-known claims about the practice of dropping pairs with an attrited unit when analyzing matched-pair designs. Contradictory advice appears in the literature about whether or not dropping pairs is beneficial or harmful, and stratifying into larger groups has been recommended as a resolution to the issue. To address these claims, we derive the estimands obtained from the difference-in-means estimator in a matched-pair design both when the observations from pairs with an attrited unit are retained and when they are dropped. We find limited evidence to support the claims that dropping pairs helps recover the average treatment effect, but we find that it may potentially help in recovering a convex weighted average of conditional average treatment effects. We report similar findings for stratified designs when studying the estimands obtained from a regression of outcomes on treatment with and without strata fixed effects.


\end{abstract}
\end{spacing}

\noindent \textsc{KEYWORDS}: Randomized controlled trial, attrition, matched pairs, stratified randomization, fixed effects

\noindent JEL classification codes: C12, C14

\thispagestyle{empty}
\newpage
\setcounter{page}{1}

\section{Introduction} \label{sec:intro}

In this paper we revisit some common recommendations regarding the analysis of matched-pair and stratified experimental designs in the presence of attrition. Here, we define attrition to mean that we do not observe outcomes for some subset of the experimental units. This situation may arise, for instance, if subjects refuse to participate in the experiment’s endline survey or if researchers lose track of subjects prior to observing their experimental outcomes. 

Our main objective is to clarify a number of well-known claims about the practice of dropping pairs with an attrited unit in matched-pair designs. Specifically, when one unit in a pair is lost, several contradictory suggestions have been made in the literature about whether or not experimenters should drop the remaining unit in their analyses.\footnote{Appendix \ref{sec:quotes} contains relevant excerpts from the referenced sources.} For instance, \cite{king2007politically} and \cite{bruhn2009pursuit} assert that a key advantage of matched-pair designs is that dropping pairs with an attrited unit may protect against attrition bias when attrition is a function of the matching variables. In contrast, \cite{glennerster2013running} claim that dropping pairs may \emph{increase} attrition bias, and point out that the widespread practice of including pair fixed effects in a regression of outcomes on treatment is equivalent to computing the difference-in-means estimator after dropping pairs. Accordingly, they go on to suggest that experimenters should instead stratify the units into larger groups if there is risk of attrition. \cite{donner2000design} assert that dropping pairs with an attrited unit is a \emph{requirement} in analyses of matched-pair designs with attrition, and characterize this as a weakness of matched-pair designs. As a result, they also recommend stratifying units into larger groups. 

To address these claims, we first derive the estimands obtained from the difference-in-means estimator in a matched-pair design both when the observations from pairs with an attrited unit are retained and when they are dropped. We find that the estimand produced when retaining the units is simply the difference in the mean outcomes conditional on not attriting. In contrast, the estimand produced when dropping the units is a complicated function of the mean outcomes and attrition probabilities conditional on the matching variables. Using this result, we show that dropping pairs does not recover the average treatment effect when attrition is a function of the matching variables, and instead recovers a convex weighted average\footnote{Here and throughout the paper we define a convex weighted average to be a weighted average whose coefficients are non-negative and sum to one.} of conditional average treatment effects. Moreover, we argue that natural conditions under which this convex weighted average further collapses to the average treatment effect are in fact stronger than the condition that attrition is independent of experimental outcomes. From these results we conclude that, although dropping pairs may potentially help in recovering a convex weighted average of conditional average treatment effects, we find limited evidence to support the claims that dropping pairs in a matched-pair design helps protect against attrition bias more generally.

Next, to address the claims that the issues surrounding whether or not to drop pairs with an attrited unit can be resolved by instead stratifying the experiment into larger groups, we repeat the above exercise in the context of a stratified randomized experiment where the strata are made up of a large number of observations. To mirror the analysis carried out for matched pairs, we study the estimands obtained from a regression of outcomes on treatment with and without strata fixed effects. We find analogous results: the estimand produced when omitting strata fixed effects is once again the difference in mean outcomes conditional on not attriting, and the estimand produced when including strata fixed effects is a function of the mean outcomes and attrition probabilities conditional on the strata labels with very similar properties to what was obtained for matched pairs. From these results we conclude that we do not find compelling evidence to support the idea that stratifying into larger groups resolves the issues surrounding attrition that we explore in this paper.

Including pair fixed effects when conducting inference via linear regression is a widely adopted practice \citep[see for instance the recommendations in][]{bruhn2009pursuit}, and is numerically equivalent to dropping pairs with an attrited unit. As a consequence, inference considerations sometimes drive the discussion of whether or not to drop pairs \citep[see for example Chapter 4, footnote 32 in][]{glennerster2013running}. However, in our view this should not play a primary role when deciding whether or not to drop pairs for three reasons. First, as we show in this paper, including vs. excluding pair fixed effects produces estimands with distinct interpretations in the presence of attrition. Second, as argued in \cite{bai2021inference} and \cite{bugni2018inference} (in settings without attrition), including pair/strata fixed effects is not a requirement for conducting valid inference on the ATE in matched-pair/stratified experiments, and there is no clear benefit obtained from doing so in general. Third, there are no formal results which justify the use of conventional robust standard errors in the presence of attrition (with or without fixed effects), and we conjecture that alternative inference procedures should be developed in this case (see Remark \ref{rem:inference} for a preliminary discussion). For these reasons, in this paper our primary  focus is on studying the interpretation of the resulting estimands.


 
Finally, we explore the empirical relevance of our results using experimental data collected in \cite{groh2016macroinsurance} as well as data collected from a systematic survey of all papers published in the American Economic Review (AER) and American Economic Journal: Applied Economics (AEJ: Applied) from 2020-2022 which conduct matched-pair or stratified experiments in the presence of attrition. Using these datasets we find that there can be noticeable differences between the point estimates obtained from dropping or retaining pairs with an attrited unit (or including/omitting stratum fixed effects), even when attrition is comparatively low. For instance, using the data in \cite{groh2016macroinsurance} we find an average absolute percentage difference of $13.82\%$ in point estimates across a collection of outcomes even with an average attrition rate of only $1.4\%$.

Our paper is related to a large literature on the analysis of randomized experiments with attrition. Most of this literature focuses on developing methods to recover the average treatment effect, often by either modeling the missing data process \citep{heckman1979sample,rubin2004multiple}, inverse probability weighting \citep{wooldridge2002inverse,little2019statistical}, bounding \citep{manski2000analysis,lee2009training,behaghel2015please}, or testing for the presence of attrition bias \citep{ghanem2021testing}. Instead, the focus of our paper is on studying the behavior of commonly used estimators in the analysis of matched-pair and stratified experiments. To our knowledge, the paper most similar to ours is \cite{fukumoto2022nonignorable}, who conducts finite population and super-population analyses of the bias and variance of the difference-in-means estimator in matched-pair designs with and without dropping pairs. However, his super-population analysis maintains a sampling framework where the observations are drawn together as pairs, whereas we consider a sampling framework where observations are drawn as individuals and then subsequently paired according to their covariates. As a consequence, his results and ours are not directly comparable \citep[we note that every empirical application we consider in Section \ref{sec:application} describes a specific procedure by which they stratified their sample using available covariates, and thus does not feature a sample constructed from pre-formed strata as modelled in][]{fukumoto2022nonignorable}. Moreover, \cite{fukumoto2022nonignorable} exclusively focuses on the setting of matched-pair designs and thus does not derive results for stratified randomized experiments.

The rest of the paper is structured as follows. In Section \ref{sec:setup} we describe our setup and introduce the main assumptions we consider on the attrition process. Section \ref{sec:results} presents the main results. In Section \ref{sec:application} we present an empirical illustration. Finally, we conclude in Section \ref{sec:recs} with some recommendations for empirical practice.

\section{Setup and Notation}\label{sec:setup}
Let $Y^*_i$ denote the realized outcome of interest for the $i$th unit in the absence of attrition, $D_i \in \{0,1\}$ denote treatment status for the $i$th unit and $X_i$ denote the observed, baseline covariates for the $i$th unit. Further denote by $Y_i(1)$ the potential outcome of the $i$th unit if treated and by $Y_i(0)$ the potential outcome if not treated. As usual, the realized outcome is related to the potential outcomes and treatment status by the relationship
\begin{equation}\label{eq:PO}
 Y^*_i = Y_i(1) D_i + Y_i(0) (1 - D_i)~.
\end{equation}

We consider a framework which allows for the possibility that units collected in the baseline survey may drop out (attrit) after treatment is assigned. In particular, let $R_i \in \{0, 1\}$ be an indicator where $R_i = 1$ indicates the $i$th unit is present in the endline survey (i.e. has \emph{not} attrited) and $R_i = 0$ indicates otherwise. Let $R_i(1)$ denote the potential attrition decision of the $i$th unit if treated, and $R_i(0)$ denote the potential attrition decision of the $i$th unit if not treated. As was the case for the realized outcome, the realized attrition decision is related to the potential attrition decisions and treatment status by the relationship 
\begin{equation}\label{eq:PA}
R_i = R_i(1) D_i + R_i(0) (1 - D_i)~.
\end{equation}
With these definitions in hand, we define the observed outcome to be
\begin{equation}\label{eq:OO}
Y_i = Y^*_iR_i = Y_i(1)R_i(1)D_i + Y_i(0)R_i(0)(1 - D_i)~.
\end{equation}
We note that the observed outcome is undefined if individual $i$ is not observed in the endline survey, and so we set it arbitrarily to zero in equation \eqref{eq:OO}.

We assume that we observe a sample $\{(Y_i, R_i, D_i, X_i): 1 \le i \le n\}$, obtained from i.i.d random variables $\{W_i : 1 \le i \le n\}$ where $W_i = (Y_i(1), Y_i(0), R_i(1), R_i(0), X_i)$. As a result, the distribution of the observed data is determined by (\ref{eq:PO}), (\ref{eq:PA}), (\ref{eq:OO}), $\{W_i : 1 \le i \le n\}$, and the mechanism for determining treatment assignment (which we specify in Sections \ref{sec:pairs} and \ref{sec:sfe}). We maintain the following assumption on $\{W_i: 1 \le i \le n\}$ throughout the entirety of the paper:
\begin{assumption} \label{as:Q}  
\hfill
\begin{enumerate}[(a)] 
\item $E[|Y_i(d)|] < \infty$ for $d \in \{0, 1\}$.
\item $E[R_i(d)] > 0$ for $d \in \{0, 1\}$.
\end{enumerate}
\end{assumption}
Assumption \ref{as:Q}(a) imposes mild restrictions on the moments of the potential outcomes. Assumption \ref{as:Q}(b) rules out situations where the probability of attrition is one for either treatment status. 


Our parameter of interest is the average treatment effect, denoted as
\begin{equation} \label{eq:ate-na}
\theta = E[Y_i(1) - Y_i(0)]~.
\end{equation}
Without further assumptions on the nature of attrition, $\theta$ is not  point-identified from the observed data. As a consequence, in this paper we first study the estimands produced by commonly used estimators in the analysis of matched-pair and stratified randomized experiments, and then document if and when these estimands collapse to $\theta$ under well-known, albeit strong, assumptions on the attrition process; see Remark \ref{rem:bias} for further discussion. The first assumption we consider is that attrition is independent of the potential outcomes:

\begin{assumption}\label{as:MIPO}
\[(Y_i(1), Y_i(0)) \indep (R_i(1), R_i(0))~.\]
\end{assumption}

Under Assumption \ref{as:MIPO}, the average treatment effect $\theta$ is point-identified in a classical randomized experiment by simply comparing the mean outcomes under treatment and control for the non-attritors \citep[see for instance][] {gerber2012field}. The next assumption we consider is that attrition is independent of potential outcomes conditional on some set of observable characteristics:

\begin{assumption}\label{as:MIPO|T}
For some set of observable characteristics $C_i$,
\[(Y_i(1), Y_i(0)) \indep (R_i(1), R_i(0)) \hspace{1mm} | \hspace{1mm} C_i~.\]
\end{assumption}
Although Assumption \ref{as:MIPO} does not necessarily imply Assumption \ref{as:MIPO|T} or vice versa, it is often argued that Assumption \ref{as:MIPO|T} may be easier to defend in practice \citep{moffit1999sample,hirano2001combining,gerber2012field,little2019statistical}. Under Assumption \ref{as:MIPO|T}, $\theta$ is point-identified in a classical randomized experiment by first identifying the average treatment effect conditional on each value $C = c$ and then averaging these conditional treatment effects across $C$. Note that Assumption \ref{as:MIPO|T} generalizes the assumption discussed in the introduction that attrition is a function of observable characteristics. The final assumption we consider is that attrition is independent of observable characteristics:

\begin{assumption}\label{as:indep_attrition}
For some set of observable characteristics $C_i$,
\[C_i \indep (R_i(1), R_i(0))~.\]
\end{assumption}
A useful observation for the discussion which follows is that, although Assumptions \ref{as:MIPO} and \ref{as:MIPO|T} are not nested, Assumptions \ref{as:MIPO|T} and \ref{as:indep_attrition} do in fact imply Assumption \ref{as:MIPO}. To see this, consider the following derivation:
\begin{align*}
 & P\{(Y_i(1), Y_i(0)) \in U_1, (R_i(1), R_i(0)) \in U_2\} \\ &= 
 E\left[E[I\{(Y_i(1), Y_i(0)) \in U_1\}I\{(R_i(1), R_i(0)) \in U_2\}|C_i]\right]\\
& = E\left[E[I\{(Y_i(1), Y_i(0)) \in U_1 | C_i]E[I\{(R_i(1), R_i(0)) \in U_2\}|C_i]\right] \\
& = E\left[E[I\{(Y_i(1), Y_i(0)) \in U_1 | C_i]E[I\{(R_i(1), R_i(0)) \in U_2\}]\right] \\
& = E\left[I\{(Y_i(1), Y_i(0)) \in U_1\}\right]E[I\{(R_i(1), R_i(0)) \in U_2\}] \\ 
& = P\{(Y_i(1), Y_i(0)) \in U_1\}P\{(R_i(1), R_i(0)) \in U_2\}~,
\end{align*}
where the first equality follows from the law of iterated expectations, the second equality from Assumption \ref{as:MIPO|T}, the third from Assumption \ref{as:indep_attrition}, and the fourth from the law of iterated expectations once again.




\section{Main Results}\label{sec:results}
\subsection{Matched-Pair Designs with Attrition}\label{sec:pairs}
In this section we study the estimands produced by the difference-in-means estimator in a matched-pair design when the observations from pairs with an attrited unit are retained and when they are dropped. Before defining the estimators we provide a formal description of the treatment assignment mechanism. To simplify the exposition, we assume that $n$ is even for the remainder of Section \ref{sec:pairs}. For any random variable indexed by $i$, for example $D_i$, we denote by $D^{(n)}$ the random vector $(D_1, D_2, \ldots, D_n)$. Let $\pi = \pi_n(X^{(n)})$ be a permutation of $\{1, \ldots, n\}$, potentially dependent on $X^{(n)}$. The $n/2$ matched pairs are then represented by the sets
\[ \left\{\{\pi(2j - 1), \pi(2j)\}: 1 \leq j \leq \frac{n}{2}\right\}~. \]
In other words, pairs are formed by arranging observations in the order $\{\pi(1), \pi(2), \ldots, \pi(n)\}$ according to the permutation $\pi$, and then forming pairs from the adjacent units as $\{\pi(1), \pi(2)\}$, $\{\pi(3), \pi(4)\}$, etc. Next, given such a $\pi$, we assume treatment status is assigned as follows:
\begin{assumption} \label{as:mp}
Treatment status is assigned so that
\[ (Y^{(n)}(1), Y^{(n)}(0), R^{(n)}(1), R^{(n)}(0)) \indep D^{(n)} | X^{(n)} \]
and, conditional on $X^{(n)}$, $(D_{\pi(2j - 1)}, D_{\pi(2j)}), 1 \leq j \leq n/2$ are i.i.d.\ and each uniformly distributed over $\{(0, 1), (1, 0)\}$.
\end{assumption}
To summarize, the assignment mechanism first forms pairs of units (according to $\pi$) and then assigns both treatments exactly once in each pair at random. The first estimator we consider is the standard difference-in-means estimator computed on non-attritors:
\begin{equation} \label{eq:est}
\hat \theta_n = \frac{\sum_{1 \leq i \leq n} Y_i R_i D_i}{\sum_{1 \leq i \leq n} R_i D_i} - \frac{\sum_{1 \leq i \leq n} Y_i R_i (1 - D_i)}{\sum_{1 \leq i \leq n} R_i (1 - D_i)}~.
\end{equation}
Note that $\hat{\theta}_n$ may be obtained as the estimator of the coefficient on $D_i$ in an ordinary least squares regression of $Y_i$ on a constant and $D_i$, computed on the non-attritors. The second estimator we consider is the difference-in-means estimator computed by first dropping any observations belonging to a pair with an attritor:
\begin{multline*}
\hat \theta_n^{\rm drop} = \left ( \sum_{1 \leq j \leq n/2} R_{\pi(2j - 1)} R_{\pi(2j)} \right )^{-1} \\
\times \left( \sum_{1 \leq j \leq n/2} R_{\pi(2j - 1)} R_{\pi(2j)} (Y_{\pi(2j - 1)} - Y_{\pi(2j)}) (D_{\pi(2j - 1)} - D_{\pi(2j)}) \right)~.
\end{multline*}
Note that $\hat{\theta}_n^{\rm drop}$ corresponds to the estimator recommended in \cite{bruhn2009pursuit} and \cite{king2007politically}. We emphasize that, in the absence of attrition, $\hat{\theta}_n$ and $\hat{\theta}^{\rm drop}_n$ are numerically equivalent. 

As a consequence of the Frisch-Waugh-Lovell theorem, $\hat{\theta}_n^{\rm drop}$ can equivalently be obtained as the ordinary least squares estimator of the coefficient on $D_i$ in the linear regression of $Y_i$ on $D_i$ and pair fixed effects computed on the non-attritors (i.e. individuals with $R_i = 1$):\footnote{See Appendix \ref{sec:FWL} for a derivation of this fact.}
\begin{equation} \label{eq:pfe}
Y_i = \theta^{\rm drop} D_i + \sum_{1 \leq j \leq n/2} \delta_j I \{i \in \{\pi(2j - 1), \pi(2j)\}\} + \epsilon_i \hspace{3mm} \text{(for individuals with $R_i = 1$)}~.
\end{equation}
Similar regression specifications are extremely common in the analysis of matched-pair experiments. See, for example, \cite{ashraf2006deposit}, \cite{angrist2009effects}, \cite{crepon2015estimating}, \cite{bruhn2016impact}, and \cite{fryer2018pupil}. 

We impose the following assumption in addition to Assumption \ref{as:Q}:
\begin{assumption} \label{as:Q-lip}
\hfill
\begin{enumerate}[(a)]
\item $E[R_i(d) | X_i = x]$ is Lipschitz in $x$ for $d \in \{0, 1\}$.
\item $E[Y_i(d) R_i(d) | X_i = x]$ is Lipschitz in $x$ for $d \in \mathcal \{0, 1\}$.
\end{enumerate}
\end{assumption}
\noindent Assumptions \ref{as:Q-lip}(a)--(b) are smoothness requirements that ensure that units that are ``close'' in terms of their baseline covariates are also ``close'' in terms of their potential attrition indicators and potential outcomes on average. Similar smoothness requirements are also imposed in \cite{bai2021inference} and \cite{bai2022optimality}.

Finally, we require that the matched-pair design is such that the units in each pair are ``close'' in terms of their baseline covariates in the following sense:
\begin{assumption} \label{as:close}
The pairs used in determining treatment status satisfy
\[ \frac{1}{n} \sum_{1 \leq j \leq n} \|X_{\pi(2j -1)} - X_{\pi(2j)}\| \stackrel{P}{\to} 0~. \]
\end{assumption}
See \cite{bai2021inference} for sufficient conditions for Assumption \ref{as:close}. In particular, if $\mathrm{dim}(X_i) = 1$, then Assumption \ref{as:close} is satisfied if $E[|X_i|] < \infty$ and we construct pairs by simply ordering the units from smallest to largest according to $X_i$ and then pairing adjacent units. For the case $\mathrm{dim}(X_i) > 1$, \cite{bai2021inference} provide sufficient conditions under which Assumption \ref{as:close} is satisfied when using the popular R package {\tt nbpMatching}. Using appropriate laws of large numbers developed in \cite{bai2021inference}, we now establish the following result:
\begin{theorem} \label{thm:pair}
Suppose the data satisfy Assumptions \ref{as:Q} and \ref{as:Q-lip} and the treatment assignment mechanism satisfies Assumptions \ref{as:mp} and \ref{as:close}. Then, as $n \to \infty$, $\hat \theta_n  \stackrel{P}{\to}  \theta^{\rm obs}$, where
\[\theta^{\rm obs} = \frac{E[R_i(1) Y_i(1)]}{E[R_i(1)]} - \frac{E[R_i(0) Y_i(0)]}{E[R_i(0)]} = E[Y_i(1)|R_i(1) = 1] - E[Y_i(0)|R_i(0) = 1]~, \]
and $\hat \theta_n^{\rm drop} \stackrel{P}{\to} \theta^{\rm drop}$, where
\[\theta^{\rm drop} = E[\tau^{\rm obs}(X_i) \rho(X_i)]~,\] with
 \begin{align*}
 \tau^{\rm obs}(x) & = E[Y_i(1) | R_i(1) = 1, X_i = x] - E[Y_i(0) | R_i(0) = 1, X_i = x] \\
 \rho(x) & = \frac{E[R_i(0)|X_i=x]E[R_i(1)|X_i=x]}{E[E[R_i(0)|X_i]E[R_i(1)|X_i]]}~.
 \end{align*}
\end{theorem}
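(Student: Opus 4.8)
The plan is to treat the two convergence claims separately; in each case I would reduce the estimator to a ratio of sample averages, replace the treatment-assignment randomness by its conditional mean given the vector $W^{(n)}$ of potential outcomes, potential attrition indicators, and covariates, and then invoke the laws of large numbers for matched-pair designs from \cite{bai2021inference}.

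\emph{The estimand $\theta^{\rm obs}$.} From \eqref{eq:OO} one has $Y_i R_i D_i = Y_i(1) R_i(1) D_i$, $R_i D_i = R_i(1) D_i$, $Y_i R_i (1-D_i) = Y_i(0) R_i(0)(1-D_i)$, and $R_i (1-D_i) = R_i(0)(1-D_i)$, so $\hat\theta_n$ is a ratio-of-averages functional of these four sums. Under Assumption \ref{as:mp}, conditional on $X^{(n)}$ the vector $D^{(n)}$ is independent of the remaining coordinates of $W^{(n)}$ and each $D_i$ is marginally uniform on $\{0,1\}$, so $E[D_i \mid W^{(n)}] = 1/2$. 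I would write each of the four sums, scaled by $1/n$, as its conditional expectation given $W^{(n)}$ plus a remainder: the conditional expectations are $\frac{1}{2n}\sum_i Y_i(1)R_i(1)$, $\frac{1}{2n}\sum_i R_i(1)$, and the control analogues, which converge in probability to $\tfrac12 E[Y_i(1)R_i(1)]$, $\tfrac12 E[R_i(1)]$, $\tfrac12 E[Y_i(0)R_i(0)]$, $\tfrac12 E[R_i(0)]$ by the ordinary weak law of large numbers using Assumption \ref{as:Q}(a); the remainders are, conditional on $W^{(n)}$, sums of pair-wise independent, mean-zero, $L^1$-bounded terms and hence $o_P(n)$ by truncation (this is the content of the relevant lemmas in \cite{bai2021inference}). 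The continuous mapping theorem and the positivity in Assumption \ref{as:Q}(b) then give $\hat\theta_n \stackrel{P}{\to} E[Y_i(1)R_i(1)]/E[R_i(1)] - E[Y_i(0)R_i(0)]/E[R_i(0)] = \theta^{\rm obs}$; the second representation follows from $E[Y_i(d)R_i(d)] = E[Y_i(d)\mid R_i(d)=1]\,P\{R_i(d)=1\}$, valid since $R_i(d)\in\{0,1\}$. This part uses neither Assumption \ref{as:Q-lip} nor Assumption \ref{as:close}.

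\emph{The estimand $\theta^{\rm drop}$.} Here the key step is an algebraic decomposition of the numerator and denominator of $\hat\theta_n^{\rm drop}$ in terms of potential outcomes and potential attrition indicators. Writing the numerator as $\sum_j T_j$ with $T_j = R_{\pi(2j-1)}R_{\pi(2j)}(Y_{\pi(2j-1)} - Y_{\pi(2j)})(D_{\pi(2j-1)} - D_{\pi(2j)})$, I would split on the within-pair assignment: on $\{D_{\pi(2j-1)}=1\}$, using \eqref{eq:OO} and $R_i(d)^2 = R_i(d)$, $T_j = R_{\pi(2j-1)}(1)R_{\pi(2j)}(0)Y_{\pi(2j-1)}(1) - R_{\pi(2j-1)}(1)R_{\pi(2j)}(0)Y_{\pi(2j)}(0)$, and on $\{D_{\pi(2j)}=1\}$ the same expression with the two units interchanged. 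Since, conditional on $W^{(n)}$, the two within-pair assignments are equally likely and independent across $j$, the conditional expectation of $\tfrac1n\sum_j T_j$ equals $\tfrac14\big(\tfrac2n\sum_j (g_{\pi(2j-1)}h_{\pi(2j)} + g_{\pi(2j)}h_{\pi(2j-1)}) - \tfrac2n\sum_j (p_{\pi(2j-1)}q_{\pi(2j)} + p_{\pi(2j)}q_{\pi(2j-1)})\big)$, where $g_i = R_i(1)Y_i(1)$, $h_i = R_i(0)$, $p_i = R_i(1)$, $q_i = R_i(0)Y_i(0)$; an identical computation gives the conditional expectation of $\tfrac1n\sum_j R_{\pi(2j-1)}R_{\pi(2j)}$ as $\tfrac14\,\tfrac2n\sum_j(p_{\pi(2j-1)}h_{\pi(2j)} + p_{\pi(2j)}h_{\pi(2j-1)})$. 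It then remains to (i) argue, exactly as in the first part, that each scaled sum differs from its conditional expectation by $o_P(1)$, and (ii) pass to the limit in the conditional expectations using the matched-pair laws of large numbers of \cite{bai2021inference}: under Assumption \ref{as:close} together with the Lipschitz conditions of Assumption \ref{as:Q-lip} (and integrability of $g_i,q_i$ from Assumption \ref{as:Q}(a)), $\tfrac2n\sum_j (a_{\pi(2j-1)}b_{\pi(2j)} + a_{\pi(2j)}b_{\pi(2j-1)}) \stackrel{P}{\to} 2\,E[E[a_i\mid X_i]\,E[b_i\mid X_i]]$ for $(a_i,b_i) \in \{(g_i,h_i),(p_i,q_i),(p_i,h_i)\}$. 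Combining, the scaled numerator converges to $\tfrac12\big(E[E[R_i(1)Y_i(1)\mid X_i]E[R_i(0)\mid X_i]] - E[E[R_i(1)\mid X_i]E[R_i(0)Y_i(0)\mid X_i]]\big)$ and the scaled denominator to $\tfrac12\,E[E[R_i(1)\mid X_i]E[R_i(0)\mid X_i]]$, and the continuous mapping theorem yields $\hat\theta_n^{\rm drop} \stackrel{P}{\to} \theta^{\rm drop}$.

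\emph{Main obstacle.} The step I expect to require the most care is (ii): one must identify the correct law of large numbers in \cite{bai2021inference} for products $a_{\pi(2j-1)}b_{\pi(2j)}$ of variables attached to the two members of a matched pair — whose limit involves the conditional means $E[\cdot\mid X_i]$ precisely \emph{because} paired units are only ``close'' rather than identical in $X_i$ — and verify that its smoothness and moment hypotheses are exactly what Assumptions \ref{as:Q-lip} and \ref{as:Q} supply. By comparison, the case analysis underlying the decomposition and the negligibility of the randomization noise are routine, although the bookkeeping that pairs $g_i$ with $h_i$ and $p_i$ with $q_i$ (rather than $g_i$ with $q_i$) must be tracked carefully, since it is the source of the apparent asymmetry between the treatment and control terms in $\theta^{\rm drop}$.
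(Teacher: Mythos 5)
Your proposal is correct and follows essentially the same route as the paper's proof: reduce both estimators to ratios of sums of potential-outcome/potential-attrition terms, average out the within-pair randomization, and invoke the matched-pair laws of large numbers of \cite{bai2021inference} (the cross-pair product version, with limits $E[E[a_i\mid X_i]E[b_i\mid X_i]]$, for the drop estimator). The only cosmetic difference is that you condition on $W^{(n)}$ and then apply the pair-product LLN to the raw variables, whereas the paper conditions on $X^{(n)}$ directly and separately shows the residual randomization noise vanishes; your bookkeeping of which factors pair together, and your observation that the first conclusion does not use Assumptions \ref{as:Q-lip} or \ref{as:close}, both match the paper.
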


Theorem \ref{thm:pair} shows that the estimand produced by the difference-in-means estimator, $\theta^{\rm obs}$, is simply the difference in the mean outcomes conditional on not attriting (under the additional assumption that $R_i(1) = R_i(0)$ this could be interpreted as the average treatment effect for units who do not attrit: see Remark \ref{rem:attrit_units} for details). It follows immediately that, under Assumption \ref{as:MIPO}, $\theta^{\rm obs} = \theta$ and thus under this assumption we recover the average treatment effect.

On the other hand, the estimand produced by first dropping units belonging to a pair with an attritor, $\theta^{\rm drop}$, is a complicated function of the mean outcomes and attrition probabilities conditional on the matching variables. First, note that unlike $\theta^{\rm obs}$, $\theta^{\rm drop}$ does not collapse to $\theta$ under Assumption \ref{as:MIPO}. Moreover, $\theta^{\rm drop}$ does not collapse to $\theta$ under Assumption \ref{as:MIPO|T} with $C_i = X_i$ either. Instead, under Assumption \ref{as:MIPO|T} with $C_i = X_i$, $\tau^{\rm obs}(x) = \tau(x)$ where $\tau(x) = E[Y_i(1) - Y_i(0)|X_i=x]$, so that
\[\theta^{\rm drop} = E[\tau(X_i)\rho(X_i)]~,\]
i.e. $\theta^{\rm drop}$ may be written as a convex weighted average of the conditional average treatment effects $\tau(x)$. In some special cases this convex-weighted average has a simple and transparent interpretation: consider for example a setting where $X_i$ is a binary variable, and suppose that attrition is such that units with $X_i = 1$ always appear in the endline survey, so that $R_i(1) = R_i(0) = 1$ if $X_i = 1$, but units with $X_i = 0$ appear only if they are treated, so that $R_i(1) = 1$ and $R_i(0) = 0$ if $X_i = 0$. Then,
\[ \rho(1) = \frac{1}{P \{X_i = 1\}}~, \] 
and $\rho(0) = 0$. We thus have that in this case
\[ \theta^{\rm drop} = E[Y_i(1) - Y_i(0) | X_i = 1]~, \]
which is the average treatment effect for those units with $X_i = 1$. In contrast, $\theta^{\rm obs}$ does not lend itself to a straightforward causal interpretation in this example (however in Remark \ref{rem:attrit_units} we provide a favorable interpretation of $\theta^{\rm obs}$ under Assumption \ref{as:MIPO|T} and the additional assumption that $R_i(1) = R_i(0)$).

In general, straightforward algebra shows that $\rho(x) = 1$ if and only if
\begin{equation}\label{eq:proportion}
E[R_i(0)|X_i = x] = \frac{E[E[R_i(0)|X_i]E[R_i(1)|X_i]]}{E[R_i(1)|X_i = x]}~.
\end{equation}
In words, $\rho(x) = 1$ if and only if the conditional probability of attrition under treatment is inversely proportional to the conditional probability of attrition under control. A natural assumption which guarantees (\ref{eq:proportion}) for all $x$ is Assumption \ref{as:indep_attrition} with $C_i = X_i$, so that attrition is independent of the matching variables $X_i$. Finally, we note that under Assumption \ref{as:indep_attrition} with $C_i = X_i$, it follows that $\theta^{\rm drop} = \theta^{\rm obs}$. As a result, $\theta^{\rm drop} = \theta$ under Assumptions \ref{as:MIPO} and \ref{as:indep_attrition}. We summarize the above discussion in the following corollary: 
 
\begin{corollary}\label{cor:drop_recover}
\hfill
\begin{enumerate}[(a)]
\item Under Assumption \ref{as:MIPO}, $\theta^{\rm obs} = \theta$. 
\item Under Assumption \ref{as:MIPO|T} with $C_i = X_i$, $\theta^{\rm drop} = E[\tau(X_i)\rho(X_i)]$. 
\item Under Assumption \ref{as:indep_attrition} with $C_i = X_i$, $\theta^{\rm drop} = \theta^{\rm obs}$. 
\item Under Assumption \ref{as:indep_attrition} with $C_i = X_i$ and either Assumption \ref{as:MIPO} or Assumption \ref{as:MIPO|T} with $C_i = X_i$, $\theta^{\rm drop} = \theta$.
\end{enumerate}
\end{corollary}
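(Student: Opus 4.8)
The plan is to read off both limits $\theta^{\rm obs}$ and $\theta^{\rm drop}$ from Theorem \ref{thm:pair} and then verify each claim by substituting the relevant (conditional) independence restriction into these closed forms; no new large-sample argument is required, since the corollary is purely about when these population objects coincide with $\theta$ or with one another. For part (a), I would use the second representation $\theta^{\rm obs} = E[Y_i(1)\mid R_i(1)=1] - E[Y_i(0)\mid R_i(0)=1]$. Assumption \ref{as:MIPO} implies the marginal independence $Y_i(d) \indep R_i(d)$ for each $d\in\{0,1\}$, and since the conditioning events have positive probability by Assumption \ref{as:Q}(b), this gives $E[Y_i(d)\mid R_i(d)=1] = E[Y_i(d)]$; subtracting yields $\theta^{\rm obs}=E[Y_i(1)]-E[Y_i(0)]=\theta$.

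For part (b), I would start from the expression for $\theta^{\rm drop}$ in Theorem \ref{thm:pair} and invoke Assumption \ref{as:MIPO|T} with $C_i=X_i$, which delivers the factorization $E[Y_i(d)R_i(d)\mid X_i] = E[Y_i(d)\mid X_i]\,E[R_i(d)\mid X_i]$ for $d\in\{0,1\}$. Substituting this into the numerator of $\theta^{\rm drop}$ and pulling out the common factor $E[R_i(1)\mid X_i]E[R_i(0)\mid X_i]$ rewrites the numerator as $E\big[\tau(X_i)\,E[R_i(1)\mid X_i]E[R_i(0)\mid X_i]\big]$; dividing by $E[E[R_i(1)\mid X_i]E[R_i(0)\mid X_i]]$ gives $E[\tau(X_i)\rho(X_i)]$ with $\rho$ as defined in the text. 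I would also record that $\rho(\cdot)\ge 0$ and $E[\rho(X_i)]=1$ — both immediate from the definition, using that $R_i(d)\in\{0,1\}$ implies $E[R_i(d)\mid X_i]\ge 0$ and that the normalizing denominator $E[E[R_i(1)\mid X_i]E[R_i(0)\mid X_i]]$ is strictly positive (implicit in the statement of Theorem \ref{thm:pair}) — so that the weighting is genuinely convex.

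For part (c), I would apply Assumption \ref{as:indep_attrition} with $C_i=X_i$, which gives $E[R_i(d)\mid X_i]=E[R_i(d)]$ for $d\in\{0,1\}$. Treating these as constants and pulling them out of the outer expectations in $\theta^{\rm drop}$, together with $E[E[Y_i(d)R_i(d)\mid X_i]]=E[Y_i(d)R_i(d)]$ by the law of iterated expectations, the factors $E[R_i(1)]$ and $E[R_i(0)]$ cancel appropriately and one is left with exactly the first representation of $\theta^{\rm obs}$. For part (d), I would chain the previous parts: under Assumption \ref{as:indep_attrition} with $C_i=X_i$, part (c) gives $\theta^{\rm drop}=\theta^{\rm obs}$, and then under Assumption \ref{as:MIPO} part (a) gives $\theta^{\rm obs}=\theta$; under Assumption \ref{as:MIPO|T} with $C_i=X_i$ instead, I would either invoke the fact established in Section \ref{sec:setup} that Assumptions \ref{as:MIPO|T} and \ref{as:indep_attrition} with a common $C_i$ jointly imply Assumption \ref{as:MIPO}, or combine part (b) with the observation that Assumption \ref{as:indep_attrition} forces $\rho\equiv 1$, so that $\theta^{\rm drop}=E[\tau(X_i)]=\theta$.

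Since every step is a direct substitution into formulas already proved in Theorem \ref{thm:pair}, there is no genuine analytical obstacle here — the substantive work lives in Theorem \ref{thm:pair} itself. The only points that warrant a line of care are the well-definedness of the conditional expectations and of the ratios appearing in $\theta^{\rm obs}$ and $\theta^{\rm drop}$ (controlled by Assumption \ref{as:Q}), and the verification in part (b) that $\rho$ indeed defines a convex weight.
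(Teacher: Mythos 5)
Your proposal is correct and takes essentially the same approach as the paper, which likewise obtains the corollary by direct substitution of Assumptions \ref{as:MIPO}--\ref{as:indep_attrition} into the closed-form expressions for $\theta^{\rm obs}$ and $\theta^{\rm drop}$ from Theorem \ref{thm:pair} (the paper presents this as the in-text discussion preceding the corollary rather than as a separate appendix proof, and it too uses both of your alternative routes for part (d): chaining (c) with (a), and observing that $\rho \equiv 1$ under Assumption \ref{as:indep_attrition}). Your added points of care --- that $\rho$ is a genuine convex weight and that strict positivity of $E[E[R_i(1)\mid X_i]E[R_i(0)\mid X_i]]$ is only implicit in the statement of Theorem \ref{thm:pair} --- are accurate but go slightly beyond what the paper records.
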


We conclude this section by noting that, as explained in the derivation following the statement of Assumption \ref{as:indep_attrition}, Assumptions \ref{as:MIPO|T} and \ref{as:indep_attrition} \emph{imply} Assumption \ref{as:MIPO}.  In other words, we see that the sufficient conditions provided in Corollary \ref{cor:drop_recover} under which $\theta^{\rm drop} = \theta$ are in fact \emph{stronger} than the conditions required for $\theta^{\rm obs} = \theta$. We thus find limited evidence to support the claims that dropping pairs in a matched-pair design helps in reducing attrition bias. However, we emphasize that dropping pairs may potentially help in recovering a convex weighted average of conditional average treatment effects. 

\begin{remark}\label{rem:inference}
In Appendix \ref{sec:norm_theta} we develop the requisite distributional results to use $\hat{\theta}_n$ for inference about $\theta^{\rm obs}$. In contrast, the large sample distribution of $\hat{\theta}^{\rm drop}_n$ seems non-trivial to characterize and may in fact feature an asymptotic bias in general. For this reason we leave an in-depth study of the limiting distribution of $\hat{\theta}^{\rm drop}_n$ to future work.
\end{remark}

\begin{remark}\label{rem:bias}
We note that, in the absence of additional assumptions like Assumptions \ref{as:MIPO}--\ref{as:indep_attrition}, we are not able to conclude that either $\theta^{\rm obs}$ or $\theta^{\rm drop}$ is less biased for $\theta$ relative to the other, and in fact it is possible to construct data generating processes where either estimand is closer to the true average treatment effect. We present a concrete construction of such a set of DGPs in Appendix \ref{sec:numerical}.
\end{remark}

\begin{remark}\label{rem:attrit_units}
From Theorem \ref{thm:pair} we also observe that, in the absence of additional assumptions like Assumptions \ref{as:MIPO}--\ref{as:indep_attrition}, neither $\theta^{\rm obs}$ nor $\theta^{\rm drop}$ can be interpreted as a ``treatment effect parameter" (in the sense that neither parameter can be interpreted as an average treatment effect for some subset of individuals or more generally as a weighted average of treatment effects). This is because the subgroup of units who attrit under treatment ($R_i(1) = 0$) may not correspond to the subgroup of units who attrit under control ($R_i(0) = 0$). Under the additional assumption that $R_i(1) = R_i(0)$, so that these subgroups coincide, we obtain
\[\theta^{\rm obs} = E[Y_i(1) - Y_i(0) | R_i = 1]~,\] 
and then $\theta^{\rm obs}$ could be understood as the average treatment effect for units who do not attrit. Imposing the same assumption for $\theta^{\rm drop}$ we obtain that
\[\theta^{\rm drop} = \frac{E[\left(Y_i(1) - Y_i(0)\right) R_i E[R_i | X_i]]}{E[R_iE[R_i | X_i]]}~,\]
and then $\theta^{\rm drop}$ could be understood as a ``probability of attrition"-weighted average of individual-level treatment effects for the non-attritors. If we additionally impose Assumption \ref{as:MIPO|T}, we alternatively obtain
\begin{equation*}
    \theta^{\rm obs} = \frac{E[\tau(X_i)E[R_i | X_i]]}{P(R_i=1)}, \quad \theta^{\rm drop} = \frac{E[\tau(X_i)E[R_i | X_i]^2]}{E[E[R_i | X_i]^2]}~.
\end{equation*}
In this case, \emph{both} parameters can be interpreted as convex weighted averages of conditional average treatment effects, with the main difference being that $\theta^{\rm obs}$ is weighted using the the conditional attrition rate $E[R_i | X_i]$, whereas $\theta^{\rm drop}$  ``doubles down" by weighting using the squared conditional attrition rate $E[R_i | X_i]^2$.
\end{remark}

\begin{remark} \label{rem:bfe}
Regardless of whether or not a practitioner finds the interpretation of $\theta^{\rm drop}$ more or less attractive than the interpretation of $\theta^{\rm obs}$, it is crucial to note that, \emph{even in the absence of attrition}, inferences produced using robust standard errors obtained from a regression with pair fixed effects are generally conservative, but in some cases may in fact be \emph{invalid}, in the sense that the limiting rejection probability could be strictly larger than the nominal level. See \cite{bai2022inference} and \cite{de2020level} for details. 
\end{remark}


\subsection{Stratified Designs with Attrition}\label{sec:sfe}
In this section we repeat the exercise presented in Section \ref{sec:pairs} but in the context of stratified designs. Before describing the estimators, we provide a description of the class of treatment assignment mechanisms we consider. In words, our results accommodate any treatment assignment mechanism which first partitions the covariate space into a finite number of ``large'' strata, and then performs treatment assignment independently across strata so as to achieve ``balance'' within each stratum. Formally, let $S:\text{supp}(X_i) \rightarrow \mathcal{S}$ be a function which maps the support of the covariates into a finite set $\mathcal{S}$ of strata labels. For $1 \le i \le n$, let $S_i = S(X_i)$ denote the strata label of individual $i$. For $s \in \mathcal{S}$, let 
\[D_n(s) = \sum_{1 \le i \le n}(D_i - \nu)I\{S_i = s\}~,\]
where $\nu \in (0, 1)$ denotes the ``target'' proportion of units to assign to treatment in each stratum. Intuitively, $D_n(s)$ measures the amount of imbalance in stratum $s$ relative to the target proportion $\nu$. Our requirements on the treatment assignment mechanism can then be summarized as follows: 

\begin{assumption} \label{as:strat}
The treatment assignment mechanism is such that 
\begin{enumerate}[(a)]
\item $W^{(n)} \indep D^{(n)} | S^{(n)}$.
\item $\frac{D_n(s)}{n} \stackrel{P}{\to} 0$ for every $s \in \mathcal{S}$.
\end{enumerate}
\end{assumption}

Assumption \ref{as:strat}(a) simply requires that treatment assignment be exogenous conditional on the strata labels. Assumption \ref{as:strat}(b) formalizes the requirement that the assignment mechanism performs treatment assignment so as to achieve ``balance'' within strata. Assumption \ref{as:strat}(b) is a relatively mild assumption which is satisfied by most stratified randomization procedures employed in field experiments: see \cite{bugni2018inference} for examples.

As before, the first estimator we consider is the standard difference-in-means estimator computed on non-attritors $\hat{\theta}_n$. The second estimator we consider, denoted $\hat{\theta}_n^{\rm sfe}$, is the estimator obtained as the estimator of the coefficient on $D_i$ in an ordinary least squares regression of $Y_i$ on $D_i$ and strata fixed effects computed on the non-attritors:
\[Y_i = \theta^{\rm sfe}D_i + \sum_{s \in \mathcal{S}}\delta_sI\{S_i = s\} + \epsilon_i \hspace{3mm} \text{(for individuals with $R_i = 1$)}~.\]
Similar regression specifications are extremely common in the analysis of stratified randomized experiments. See, for example, \cite{bruhn2009pursuit}, \cite{duflo2015education}, \cite{glennerster2013running}, \cite{de_mel2019labor}, and \cite{callen2020data}. Using appropriate laws of large numbers developed in \cite{bugni2018inference}, we now establish the following result:
\begin{theorem}\label{thm:covariate-adaptive}
Suppose the data satisfy Assumptions \ref{as:Q} and the treatment assignment mechanism satisfies Assumption \ref{as:strat}, Then as $n \to \infty$, $\hat{\theta}_n \stackrel{P}{\to} \theta^{\rm obs}$, where
\[\theta^{\rm obs} = \frac{E[R_i(1) Y_i(1)]}{E[R_i(1)]} - \frac{E[R_i(0) Y_i(0)]}{E[R_i(0)]} = E[Y_i(1)|R_i(1) = 1] - E[Y_i(0)|R_i(0) = 1]~,\]
and $\hat{\theta}^{\rm sfe}_n \stackrel{P}{\to} \theta^{\rm sfe}$, where
\begin{multline*}
\theta^{\rm sfe} = \left ( E \left [ \frac{E[R_i(1) | S_i] E[R_i(0) | S_i]}{\nu E[R_i(1) | S_i] + (1 - \nu) E[R_i(0) | S_i]} \right ] \right )^{-1} \\
\times
E \left [ \frac{E[R_i(1) Y_i(1) | S_i] E[R_i(0) | S_i] - E[R_i(0) Y_i(0) | S_i] E[R_i(1) | S_i]}{\nu E[R_i(1) | S_i] + (1 - \nu) E[R_i(0) | S_i]} \right ]~.    
\end{multline*}

\end{theorem}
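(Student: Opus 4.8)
The plan is to express both estimators as continuous transformations of a fixed, finite collection of sample averages of the form $\frac1n\sum_{i}f(W_i)D_i$ and $\frac1n\sum_i f(W_i)(1-D_i)$, and then invoke the law of large numbers for covariate-adaptive randomization of \cite{bugni2018inference}, which under Assumption \ref{as:strat} gives $\frac1n\sum_i f(W_i)D_i\stackrel{P}{\to}\nu E[f(W_i)]$ and $\frac1n\sum_i f(W_i)(1-D_i)\stackrel{P}{\to}(1-\nu)E[f(W_i)]$ for any $f$ with $E|f(W_i)|<\infty$. For $\hat\theta_n$ this is immediate: using \eqref{eq:PA}, \eqref{eq:OO} and that $D_i,R_i(1),R_i(0)$ are $\{0,1\}$-valued, we have $Y_iR_iD_i=Y_i(1)R_i(1)D_i$, $R_iD_i=R_i(1)D_i$, $Y_iR_i(1-D_i)=Y_i(0)R_i(0)(1-D_i)$ and $R_i(1-D_i)=R_i(0)(1-D_i)$; applying the LLN with $f(W_i)\in\{Y_i(1)R_i(1),R_i(1),Y_i(0)R_i(0),R_i(0)\}$, each with finite first moment by Assumption \ref{as:Q}(a), and the continuous mapping theorem, the factors $\nu$ and $1-\nu$ cancel in the two ratios and $\hat\theta_n\stackrel{P}{\to}\theta^{\rm obs}$; the limiting denominators $\nu E[R_i(1)]$ and $(1-\nu)E[R_i(0)]$ are nonzero by Assumption \ref{as:Q}(b) and $\nu\in(0,1)$. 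This step mirrors the corresponding step of Theorem \ref{thm:pair}, now using the LLN of \cite{bugni2018inference} in place of that of \cite{bai2021inference}.

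For $\hat\theta_n^{\rm sfe}$ the first step is an exact finite-sample reduction via the Frisch--Waugh--Lovell theorem. For $s\in\mathcal{S}$ and $d\in\{0,1\}$ let $n_d(s)=\sum_i R_iI\{D_i=d\}I\{S_i=s\}$, $n_R(s)=n_1(s)+n_0(s)$, and $\bar Y_d(s)=n_d(s)^{-1}\sum_i Y_iR_iI\{D_i=d\}I\{S_i=s\}$. Partialling the strata dummies out of both $Y_i$ and $D_i$ among non-attritors, and using that the within-stratum treated fraction among non-attritors equals $n_1(s)/n_R(s)$, a direct computation gives
\[
\hat\theta_n^{\rm sfe}=\frac{\sum_{s\in\mathcal{S}}w_n(s)\,(\bar Y_1(s)-\bar Y_0(s))}{\sum_{s\in\mathcal{S}}w_n(s)},\qquad w_n(s)=\frac{n_1(s)\,n_0(s)}{n_R(s)}~,
\]
so $\hat\theta_n^{\rm sfe}$ is a convex weighted average of within-stratum differences in means. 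I would then rewrite numerator and denominator in terms of the basic averages via $n_1(s)\bar Y_1(s)=\sum_i Y_i(1)R_i(1)D_iI\{S_i=s\}$, $n_0(s)\bar Y_0(s)=\sum_i Y_i(0)R_i(0)(1-D_i)I\{S_i=s\}$, $n_1(s)=\sum_i R_i(1)D_iI\{S_i=s\}$ and $n_0(s)=\sum_i R_i(0)(1-D_i)I\{S_i=s\}$, so that $\tfrac1n w_n(s)$ and $\tfrac1n w_n(s)(\bar Y_1(s)-\bar Y_0(s))$ are expressed as ratios and products of the five averages $\tfrac1n n_1(s)$, $\tfrac1n n_0(s)$, $\tfrac1n n_R(s)$, $\tfrac1n\sum_i Y_i(1)R_i(1)D_iI\{S_i=s\}$ and $\tfrac1n\sum_i Y_i(0)R_i(0)(1-D_i)I\{S_i=s\}$.

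The last step applies the \cite{bugni2018inference} LLN to each of these $O(|\mathcal{S}|)$ averages --- legitimate since $S_i=S(X_i)$ is a function of $W_i$ and each $f$ has finite first moment --- yielding $\tfrac1n n_1(s)\stackrel{P}{\to}\nu P\{S_i=s\}E[R_i(1)\mid S_i=s]$, $\tfrac1n n_0(s)\stackrel{P}{\to}(1-\nu)P\{S_i=s\}E[R_i(0)\mid S_i=s]$, $\tfrac1n n_R(s)\stackrel{P}{\to}P\{S_i=s\}(\nu E[R_i(1)\mid S_i=s]+(1-\nu)E[R_i(0)\mid S_i=s])$, $\tfrac1n\sum_i Y_i(1)R_i(1)D_iI\{S_i=s\}\stackrel{P}{\to}\nu P\{S_i=s\}E[R_i(1)Y_i(1)\mid S_i=s]$, and the analogue for the control term. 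The continuous mapping theorem then gives the limits of $\tfrac1n w_n(s)$ and $\tfrac1n w_n(s)(\bar Y_1(s)-\bar Y_0(s))$, in each of which a common factor $\nu(1-\nu)P\{S_i=s\}$ appears; summing over $s$, cancelling this factor from numerator and denominator, and using $\sum_{s\in\mathcal{S}}P\{S_i=s\}g(s)=E[g(S_i)]$, the ratio collapses to exactly the expression for $\theta^{\rm sfe}$ in the statement.

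I expect the difficulties to be bookkeeping rather than anything deep. The main one is pinning down the Frisch--Waugh--Lovell weights --- in particular checking that the within-stratum sum $\sum_{i:S_i=s,R_i=1}(D_i-n_1(s)/n_R(s))^2$ also equals $n_1(s)n_0(s)/n_R(s)$, so that the same $w_n(s)$ governs both numerator and denominator. A secondary point is degenerate strata: strata with $E[R_i(1)\mid S_i=s]=E[R_i(0)\mid S_i=s]=0$ contribute $o_P(n)$ to both sums and can be dropped from the outset, while the continuous mapping argument requires $E\big[E[R_i(1)\mid S_i]E[R_i(0)\mid S_i]/(\nu E[R_i(1)\mid S_i]+(1-\nu)E[R_i(0)\mid S_i])\big]>0$, i.e.\ some stratum with positive probability in which units are observed at endline with positive probability under both treatment and control --- a condition that is in any case necessary for $\theta^{\rm sfe}$ to be well-defined; on the event $\sum_{s}w_n(s)>0$, which then has probability tending to one, the argument goes through.
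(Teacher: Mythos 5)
Your proposal is correct and follows essentially the same route as the paper: both parts rest on the Frisch--Waugh--Lovell reduction of $\hat\theta_n^{\rm sfe}$ followed by the law of large numbers for covariate-adaptive randomization in \cite{bugni2018inference} (Lemma B.3) and the continuous mapping theorem, with the common factor $\nu(1-\nu)$ cancelling between numerator and denominator. The only difference is presentational --- you expand the residualized form $\sum_i R_i\tilde D_iY_i/\sum_i R_i\tilde D_i^2$ into an explicit convex combination of within-stratum difference-in-means with weights $n_1(s)n_0(s)/n_R(s)$, whereas the paper works with the residualized sums directly --- and your explicit treatment of degenerate strata and of the positivity of the limiting denominator is a point of care the paper leaves implicit.
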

The conclusions we draw from Theorem \ref{thm:covariate-adaptive} closely mirror those of Theorem \ref{thm:pair}. In this case, under Assumption \ref{as:MIPO|T} with $C_i = S_i$, 
\[\theta^{\rm sfe} = E\left[\tau(S_i)\lambda(S_i)\right]~,\]
where $\tau(s) = E[Y_i(1) - Y_i(0)|S_i = s]$ and
\[\lambda(s) =  \left ( E \left [ \frac{E[R_i(1) | S_i] E[R_i(0) | S_i]}{\nu E[R_i(1) | S_i] + (1 - \nu) E[R_i(0) | S_i]} \right ] \right )^{-1}\times\frac{E[R_i(1)| S_i = s] E[R_i(0) | S_i = s]}{\nu E[R_i(1) | S_i=s] + (1 - \nu) E[R_i(0) | S_i=s]}~,\]
so that $\theta^{\rm sfe}$ is also a convex weighted average of the strata-level treatment effects $\tau(s)$, although the weights $\lambda(s)$ are arguably more complicated to interpret than the weights $\rho(x)$ defined in Section \ref{sec:pairs}. Straightforward algebra shows that $\lambda(s) = 1$ if and only if
\begin{equation}\label{eq:strat_proportion}
E[R_i(1)|S_i = s] = \frac{E[R_i(0)|S_i = s](1 - \nu)\Lambda}{E[R_i(0)|S_i = s] - \Lambda\nu}~,
\end{equation}
where $\Lambda = E \left [ \frac{E[R_i(1) | S_i] E[R_i(0) | S_i]}{\nu E[R_i(1) | S_i] + (1 - \nu) E[R_i(0) | S_i]} \right ]$. Conditions under which this holds seem difficult to articulate in words, but once again a natural assumption which guarantees (\ref{eq:strat_proportion}) for every $s \in \mathcal{S}$ is that Assumption \ref{as:indep_attrition} is satisfied with $C_i = S_i$. We summarize these observations in the following corollary:

\begin{corollary}\label{cor:strat_recover}
\hfill
\begin{enumerate}[(a)]
\item Under Assumption \ref{as:MIPO}, $\theta^{\rm obs} = \theta$. 
\item Under Assumption \ref{as:MIPO|T} with $C_i = S_i$, $\theta^{\rm sfe} = E[\tau(S_i)\lambda(S_i)]$. 
\item Under Assumption \ref{as:indep_attrition} with $C_i = S_i$, $\theta^{\rm sfe} = \theta^{\rm obs}$.
\item Under Assumption \ref{as:indep_attrition} with $C_i = S_i$ and either Assumption \ref{as:MIPO} or Assumption \ref{as:MIPO|T} with $C_i = S_i$, we obtain $\theta^{\rm sfe} = \theta$.
\end{enumerate}
\end{corollary}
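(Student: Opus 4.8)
The plan is to obtain every part of Corollary~\ref{cor:strat_recover} directly from the closed-form limits for $\theta^{\rm obs}$ and $\theta^{\rm sfe}$ in Theorem~\ref{thm:covariate-adaptive}, so the whole argument amounts to plugging the conditional independence restrictions of Assumptions~\ref{as:MIPO}--\ref{as:indep_attrition} into those formulas. For part (a) I would start from the identity $\theta^{\rm obs} = E[Y_i(1)\mid R_i(1) = 1] - E[Y_i(0)\mid R_i(0) = 1]$ recorded in the theorem; under Assumption~\ref{as:MIPO}, $(Y_i(1),Y_i(0))$ is independent of $(R_i(1),R_i(0))$, so $E[Y_i(d)\mid R_i(d)=1]=E[Y_i(d)]$ for $d\in\{0,1\}$, and hence $\theta^{\rm obs}=E[Y_i(1)-Y_i(0)]=\theta$.

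For part (b), under Assumption~\ref{as:MIPO|T} with $C_i=S_i$ I would use the conditional independence to factor $E[R_i(d)Y_i(d)\mid S_i]=E[R_i(d)\mid S_i]\,E[Y_i(d)\mid S_i]$. Substituting into the second factor of $\theta^{\rm sfe}$, the numerator inside the expectation becomes $E[R_i(1)\mid S_i]E[R_i(0)\mid S_i]\big(E[Y_i(1)\mid S_i]-E[Y_i(0)\mid S_i]\big)=E[R_i(1)\mid S_i]E[R_i(0)\mid S_i]\,\tau(S_i)$, which, after dividing by the normalizing constant $E\big[\tfrac{E[R_i(1)\mid S_i]E[R_i(0)\mid S_i]}{\nu E[R_i(1)\mid S_i]+(1-\nu)E[R_i(0)\mid S_i]}\big]$, is exactly $E[\tau(S_i)\lambda(S_i)]$ with $\lambda$ as defined before the corollary. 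I would also note that $\lambda(s)\ge 0$ (every factor is non-negative since $\nu\in(0,1)$ and $R_i(d)\in\{0,1\}$) and $E[\lambda(S_i)]=1$ (immediate from the definition of $\lambda$), so the representation is a genuine convex weighted average.

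For part (c), under Assumption~\ref{as:indep_attrition} with $C_i=S_i$ the conditional attrition probabilities $E[R_i(d)\mid S_i]$ are constant and equal to $r_d:=E[R_i(d)]$, which is strictly positive by Assumption~\ref{as:Q}(b). Then the denominator $\nu E[R_i(1)\mid S_i]+(1-\nu)E[R_i(0)\mid S_i]=\nu r_1+(1-\nu)r_0$ is a positive constant that cancels between the two factors of $\theta^{\rm sfe}$; pulling the remaining constants outside the expectations and applying the law of iterated expectations to $E[R_i(d)Y_i(d)\mid S_i]$ yields $\theta^{\rm sfe}=\tfrac{1}{r_1 r_0}\big(r_0 E[R_i(1)Y_i(1)]-r_1 E[R_i(0)Y_i(0)]\big)=\tfrac{E[R_i(1)Y_i(1)]}{r_1}-\tfrac{E[R_i(0)Y_i(0)]}{r_0}=\theta^{\rm obs}$. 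Part (d) then follows by chaining the earlier parts: under Assumption~\ref{as:indep_attrition} with $C_i=S_i$ together with Assumption~\ref{as:MIPO}, combine (c) and (a); under Assumption~\ref{as:indep_attrition} with $C_i=S_i$ together with Assumption~\ref{as:MIPO|T} with $C_i=S_i$, first invoke the derivation following the statement of Assumption~\ref{as:indep_attrition} (which shows these two jointly imply Assumption~\ref{as:MIPO}) and then apply (c) and (a); alternatively, one can read off from (b) that $\lambda(s)\equiv 1$ when $E[R_i(d)\mid S_i]$ is constant, whence $\theta^{\rm sfe}=E[\tau(S_i)]=\theta$ by the law of iterated expectations.

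I do not anticipate a substantive obstacle: every step is an algebraic rearrangement of the formulas in Theorem~\ref{thm:covariate-adaptive} combined with a conditional independence substitution. The points that require the most care are (i) tracking which conditioning event ($R_i(d)=1$, $S_i$, or none) is relevant in each part; (ii) justifying that the denominators are nonzero, which reduces to Assumption~\ref{as:Q}(b) in parts (c)--(d) and is inherited from the well-posedness of $\theta^{\rm sfe}$ in Theorem~\ref{thm:covariate-adaptive} in part (b); and (iii) the small bookkeeping needed to confirm $E[\lambda(S_i)]=1$, so that part (b) delivers a bona fide convex combination rather than merely a weighted sum.
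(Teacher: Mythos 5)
Your proposal is correct and follows essentially the same route as the paper, which establishes the corollary through the algebraic discussion immediately preceding it (factoring $E[R_i(d)Y_i(d)\mid S_i]$ under conditional independence, noting that Assumption \ref{as:indep_attrition} makes $E[R_i(d)\mid S_i]$ constant so the weights collapse, and chaining with the implication that Assumptions \ref{as:MIPO|T} and \ref{as:indep_attrition} jointly imply Assumption \ref{as:MIPO}). Your added bookkeeping that $\lambda(s)\ge 0$ and $E[\lambda(S_i)]=1$ is a minor but welcome explicit verification of the convexity claim.
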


We conclude this section by stating that, given how closely the results presented in Section \ref{sec:sfe} mirror those in Section \ref{sec:pairs}, we do not find compelling evidence to support the idea that stratifying into larger groups resolves the issues surrounding attrition that we explore in this paper.




\section{Empirical Illustrations}\label{sec:application}

\subsection{Re-analysis of \cite{groh2016macroinsurance}}
In this section we illustrate the potential empirical relevance of deciding whether or not to drop pairs with an attrited unit using the experimental data collected in \cite{groh2016macroinsurance}, which implemented a matched-pair design in the presence of attrition. The regression specifications in the paper contain pair fixed effects, which, as explained in Section \ref{sec:pairs}, is mechanically equivalent to dropping pairs with an attrited unit when regressing outcomes on a constant and treatment. 


\cite{groh2016macroinsurance} study the effect of insuring microenterprises (clients) against macroeconomic instability and political uncertainty in post-revolution Egypt. A baseline survey was completed for 2961 clients, who were then randomly assigned to treatment (1481 individuals) and control (1480 individuals) using a matched-pair design\footnote{Per the authors, they ``created matched pairs [...] to minimize the Mahalanobis distance between the values of 13 variables that [they] hypothesized may determine loan take-up and investment decisions''. The final assignment contained \textit{one} stratum with 16 individuals, each belonging to a different branch office. We follow the authors' methodology in \textit{keeping} this stratum when we conduct our analysis in Table \ref{table:est-main-GM}. We drop these when we perform additional analyses in Table \ref{table:est-supp-GM}.}. 
In Table \ref{table:est-main-GM} we reproduce the intention-to-treat estimates from Table 7 of their paper, which presents estimated treatment effects on profits, revenues, employees and household consumption. ``Original'' corresponds to the estimates obtained from running the regression specifications in the original paper which include pair fixed effects, and $\hat{\theta}_n$ corresponds to estimates obtained from running an identical regression specification without pair fixed effects (we note that we were able successfully reproduce all of the reported estimates from the paper). We find an average absolute percentage difference of $13.82\%$\footnote{Here the absolute percentage difference is computed as $\left(\frac{|\text{Original} - \hat{\theta}_n|}{|\text{Original}|}\right)\times 100$.} for the point estimates of these effects, with the largest differences appearing for profits and revenue.

\begin{table}[htbp]
  \centering
  \caption{Summary of Estimates Obtained from Empirical Application: \cite{groh2016macroinsurance}}
  \begin{adjustbox}{max width=\linewidth,center}
    \begin{tabular}{ccccccccc}
    \toprule
          &       & High  &       & High  & Number & Any   & Owner's & Monthly \\
          & Profits & Profit & Revenue & Revenue & Employees & Worker & Hours & Consumption \\
    \midrule
   Original & -59.702 & -0.009 & -737.199 & -0.020 & -0.024 & 0.008 & -0.655 & -7.551 \\ 
   \addlinespace
   $\widehat{\theta}_n$ & -38.642 & -0.007 & -692.818 & -0.020 & -0.023 & 0.007 & -0.773 & -7.337 \\ 
   \addlinespace
Attrition (\%) & 2.086 & 2.086 & 2.153 & 2.153 & 1.783 & 1.783 & 1.480 & 0.000 \\ 
    \bottomrule
    \end{tabular}%
    \end{adjustbox}
\begin{tablenotes} \footnotesize 
\item Note: For each outcome listed in Table 7 of \cite{groh2016macroinsurance}, we report (a) the original estimates obtained in paper (``Original''), (b) the estimate on treatment status without pair fixed effects ($\widehat{\theta}_n$), and (c) the attrition rate in \% by outcome, defined as [number of individuals with missing outcome / total number of individuals]. The regression specifications here include baseline covariates; see Table \ref{table:est-supp-GM} for analogous results without baseline covariates included.  
\end{tablenotes}
\label{table:est-main-GM}\end{table}%

One caveat to the findings in Table \ref{table:est-main-GM} is that the setting does not map exactly into our theoretical results: first, both regressions control for baseline covariates and second, the final assignment contained \emph{one} stratum with 16 individuals, each belonging to a different branch office. Given this, in Table \ref{table:est-supp-GM} we report the intention-to-treat estimates without baseline covariates and without this additional stratum. In this case we find an average absolute percentage difference of $15.61\%$ for the point estimates of the effects. We emphasize that we consider these difference particularly salient given that attrition is quite low (on average $1.4\%$ across the outcomes), and that in the absence of attrition these estimates would be \emph{numerically identical}, as illustrated from the estimates of the effect of treatment for monthly consumption.

\begin{table}[ht!]
  \centering
  \caption{Summary of Additional Estimates Obtained from Empirical Application: \cite{groh2016macroinsurance}}
  \begin{adjustbox}{max width=\linewidth,center}
    \begin{tabular}{ccccccccc}
    \toprule
   &       & High  &       & High  & Number & Any   & Owner's & Monthly \\
          & Profits & Profit & Revenue & Revenue & Employees & Worker & Hours & Consumption \\
    \midrule
   Original & -91.197 & -0.011 & -967.967 & -0.024 & -0.032 & 0.004 & -0.561 & -3.600 \\ 
   \addlinespace
   $\hat{\theta}_n$ & -80.058 & -0.009 & -888.608 & -0.023 & -0.026 & 0.005 & -0.481 & -3.600 \\
   \addlinespace
   Attrition (\%) & 1.755 & 1.755 & 1.824 & 1.824 & 1.411 & 1.411 & 1.514 & 0.000 \\
    \bottomrule
    \end{tabular}%
    \end{adjustbox}
\begin{tablenotes} \footnotesize 
\item Note: For each outcome regression specification listed in Table 7 of \cite{groh2016macroinsurance}, we report (a) the original estimates obtained in paper (``Original''), (b) the estimate on treatment status without pair fixed effects ($\hat{\theta}_n$), and (c) the attrition rate in \% by outcome, defined as [number of individuals with missing outcome / total number of individuals]. The regression specifications here exclude baseline covariates from the authors' original work. 
\end{tablenotes}
\label{table:est-supp-GM}\end{table}%


\subsection{Re-analysis of Recent Publications in AER \& AEJ: Applied} \label{sec:application-aer-aej}

Next, we perform a similar exercise using the data from a systematic survey of all papers published in the American Economic Review (AER) and the American Econonomic Journal: Applied Economics (AEJ: Applied) from 2020-2022 which conducted matched-pair or stratified randomized experiments in the presence of attrition. Our survey identified seven such papers: \cite{abebe2021selection}, \cite{attanasio2020estimating}, \cite{carter2021subsidies}, \cite{casaburi2021using}, \cite{dhar2022reshaping}, \cite{hjort2021research}, and \cite{romero2020outsourcing}. For each paper, we collected a set of ``relevant" regression specifications,\footnote{We note that in some papers such as \cite{attanasio2020estimating} and \cite{casaburi2021using} the primary results were not necessarily the output of a linear regression, and so in these cases we selected a collection of preliminary regression analyses. In other papers such as \cite{hjort2021research}, the primary results were LATE estimates obtained via IV regression, and so in these cases we report the intention to treat analyses. Specific selection details for each paper are outlined in Appendix \ref{sec:addl-info-empirical}.} and reproduced these regressions with and without pair/stratum fixed effects (we note that we were able to successfully reproduce all of the reported estimates from each paper). In Figure \ref{fig:comparison} we report the average absolute percentage change (computed as $\left(\frac{|\text{Alternative} - \text{Original}|}{|\text{Original}|}\right)\times 100$, where ``Original" corresponds to the point estimate computed in the paper, and ``Alternative" corresponds to the estimate computed from the alternative specification with or without fixed effects) across all specifications for each paper. Similar to our findings for \cite{groh2016macroinsurance}, we find that there can be noticeable differences in the point estimates with and without fixed effects (although we emphasize that we do not claim that these differences are necessarily statistically significant).

\begin{figure}[ht!]
    \centering
    \includegraphics[width=\textwidth]{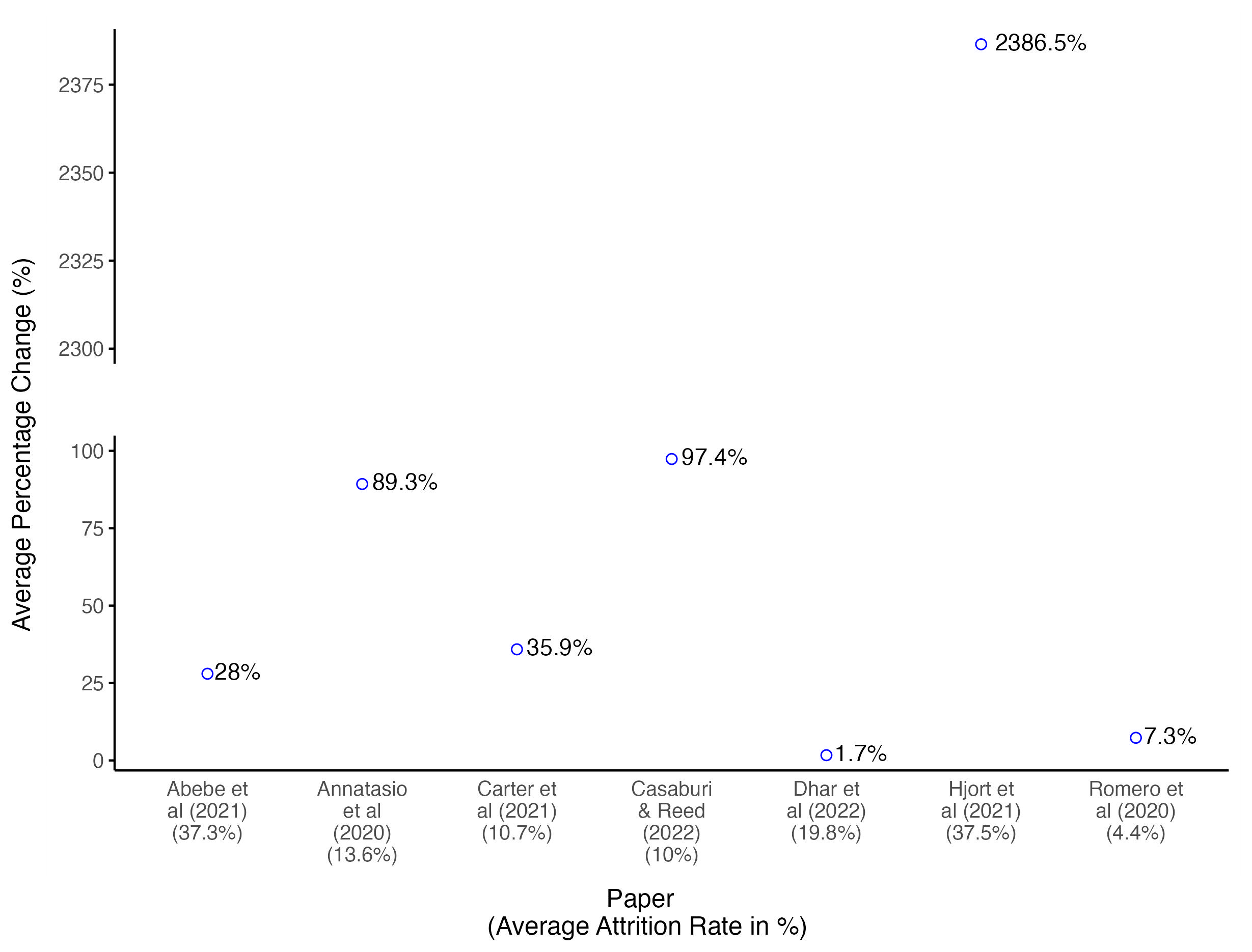}
    \caption{Average absolute percentage difference for ``Original" vs ``Alternative" point estimates. Average attrition rate, defined as [number of individuals with missing outcome / total number of individuals] is reported in parentheses below each author label.}
    \label{fig:comparison}
\end{figure}

\section{Recommendations for Empirical Practice}\label{sec:recs}
We conclude with some recommendations for empirical practice based on our theoretical results. Our main takeaway is that choosing whether or not to include pair/strata fixed effects when attrition is a concern can make a substantive difference to empirical findings and to the interpretation of the resulting estimand. In our view, unless practitioners are interested in recovering the convex-weighted averages produced by $\theta^{\rm drop}$ and $\theta^{\rm sfe}$ under a conditional independence assumption (Assumption \ref{as:MIPO|T}), primary analyses should be based on regressions \emph{without} pair/strata fixed effects: the resulting estimand $\theta^{\rm obs}$ has a simple interpretation in the absence of any assumptions, and collapses to the average treatment effect under arguably weaker assumptions than $\theta^{\rm drop}$ and $\theta^{\rm sfe}$. A secondary benefit of $\theta^{\rm obs}$ is that, under the additional assumption that $R_i(1) = R_i(0)$, $\theta^{\rm obs}$ \emph{also} enjoys an interpretation as a convex-weighted average under Assumption \ref{as:MIPO|T}, with weights which may be more desirable than those appearing in $\theta^{\rm drop}$ or $\theta^{\rm sfe}$ in that they do not ``double-up" on attrition: see Remark \ref{rem:attrit_units} for details.

\clearpage
\bibliography{attrition}

\pagebreak

\appendix
\section{Appendix}
\subsection{Proof of Theorem \ref{thm:pair}}
First we argue that $\hat{\theta}_n \stackrel{P}{\to} \theta^{\rm obs}$. Note that
\[\hat{\theta}_n = \frac{\sum_{1 \leq i \leq n} Y_i(1) R_i(1) D_i}{\sum_{1 \leq i \leq n} R_i(1) D_i} - \frac{\sum_{1 \leq i \leq n} Y_i(0) R_i(0) (1 - D_i)}{\sum_{1 \leq i \leq n} R_i(0) (1 - D_i)}~.\]
By Lemma S.1.5 in \cite{bai2021inference},
\begin{align*}
&\frac{1}{n/2}\sum_{1 \le i \le n}Y_i(1) R_i(1) D_i \stackrel{P}{\to}  E[Y_i(1)R_i(1)]~,\\
    &\frac{1}{n/2}\sum_{1 \leq i \leq n}R_i(1)D_i \stackrel{P}{\to}  E[R_i(1)]~, \\
    &\frac{1}{n/2}\sum_{1 \leq i \leq n} Y_i(0) R_i(0) (1 - D_i)  \stackrel{P}{\to} E[Y_i(0)R_i(0)]~, \\
    &\frac{1}{n/2}\sum_{1 \leq i \leq n}R_i(0)D_i \stackrel{P}{\to} E[R_i(0)]~.
\end{align*}
Hence the result follows by the continuous mapping theorem. Next we argue that $\hat{\theta}_n^{\rm drop} \stackrel{P}{\to} \theta^{\rm drop}$. To begin, recall that $\hat \theta_n^{\rm drop} = \mathbb B_n^{-1} \mathbb C_n$, where
\begin{align*}
\mathbb B_n & = \frac{1}{n/2} \sum_{1 \leq j \leq n/2} R_{\pi(2j - 1)} R_{\pi(2j)} \\
\mathbb C_n & = \frac{1}{n/2}\sum_{1 \leq j \leq n/2} R_{\pi(2j - 1)} R_{\pi(2j)} (Y_{\pi(2j - 1)} - Y_{\pi(2j)}) (D_{\pi(2j - 1)} - D_{\pi(2j)})
\end{align*}
For $\mathbb B_n$, it follows Assumptions \ref{as:mp}, \ref{as:Q-lip}(a), \ref{as:close}, the fact that $R_i(d) \in \{0, 1\}$ for $d \in \{0, 1\}$ and therefore has finite second moments, and similar arguments to those in the proof of Lemma S.1.6 of \cite{bai2021inference} that as $n \to \infty$,
\begin{equation} \label{eq:Bn}
\mathbb B_n \stackrel{P}{\to} E[E[R_i(1) | X_i] E[R_i(0) | X_i]]~.
\end{equation}
Next, we turn to $\mathbb C_n$. Note
\begin{multline*}
\mathbb C_n = \frac{1}{n/2} \sum_{1 \leq j \leq n/2} \Big ( R_{\pi(2j - 1)}(1) R_{\pi(2j)}(0) (Y_{\pi(2j - 1)}(1) - Y_{\pi(2j)}(0)) D_{\pi(2j - 1)} \\
+ R_{\pi(2j - 1)}(0) R_{\pi(2j)}(1) (Y_{\pi(2j)}(1) - Y_{\pi(2j - 1)}(0)) (1 - D_{\pi(2j - 1)}) \Big )~.
\end{multline*}
It follows from Assumption \ref{as:mp} and $Q_n = Q^{n}$ that
\begin{align*}
E[\mathbb C_n | X^{(n)}] & = \frac{1}{n} \sum_{1 \leq j \leq n/2} \Big ( E[Y_{\pi(2j - 1)}(1) R_{\pi(2j - 1)}(1)  | X_{\pi(2j - 1)}] E[R_{\pi(2j)}(0) | X_{\pi(2j)}] \\
& \hspace{5em} - E[Y_{\pi(2j)}(0) R_{\pi(2j)}(0)  | X_{\pi(2j)}] E[R_{\pi(2j - 1)}(1) | X_{\pi(2j - 1)}] \\
& \hspace{5em} + E[Y_{\pi(2j)}(1) R_{\pi(2j)}(1)  | X_{\pi(2j)}] E[R_{\pi(2j - 1)}(0) | X_{\pi(2j - 1)}] \\
& \hspace{5em} - E[Y_{\pi(2j - 1)}(0) R_{\pi(2j - 1)}(0)  | X_{\pi(2j - 1)}] E[R_{\pi(2j)}(1) | X_{\pi(2j)}]\Big )
\end{align*}
Next, it follows from Assumptions \ref{as:Q}(a), \ref{as:Q-lip}(b), \ref{as:mp}, \ref{as:close}, and similar arguments to those in the proof of Lemma S.1.6 of \cite{bai2021inference} that as $n \to \infty$,
\begin{multline} \label{eq:Cn1}
\frac{1}{n} \sum_{1 \leq j \leq n/2} \Big ( E[Y_{\pi(2j - 1)}(1) R_{\pi(2j - 1)}(1)  | X_{\pi(2j - 1)}] E[R_{\pi(2j)}(0) | X_{\pi(2j)}] \\
+ E[Y_{\pi(2j)}(1) R_{\pi(2j)}(1)  | X_{\pi(2j)}] E[R_{\pi(2j - 1)}(0) | X_{\pi(2j - 1)}] \Big ) \stackrel{P}{\to} E[E[Y_i(1) R_i(1) | X_i] E[R_i(0) | X_i]]
\end{multline}
and
\begin{multline} \label{eq:Cn2}
\frac{1}{n} \sum_{1 \leq j \leq n/2} \Big ( E[Y_{\pi(2j - 1)}(0) R_{\pi(2j - 1)}(0)  | X_{\pi(2j - 1)}] E[R_{\pi(2j)}(1) | X_{\pi(2j)}] \\
+ E[Y_{\pi(2j)}(0) R_{\pi(2j)}(0)  | X_{\pi(2j)}] E[R_{\pi(2j - 1)}(1) | X_{\pi(2j - 1)}] \Big ) \stackrel{P}{\to} E[E[Y_i(0) R_i(0) | X_i] E[R_i(1) | X_i]]~.
\end{multline}
Moreover, it can be shown using similar arguments to those in the proof of Lemma S.1.6 of \cite{bai2021inference} that 
\begin{equation}\label{eq:Cn3}
\left|\mathbb{C}_n - E[\mathbb{C}_n|X^{(n)}]\right| \stackrel{P}{\to} 0~,
\end{equation}
and hence by combining \eqref{eq:Cn1}-\eqref{eq:Cn3} we obtain that
\begin{equation} \label{eq:Cn}
\mathbb{C}_n \stackrel{P}{\to} [E[Y_i(1) R_i(1) | X_i] E[R_i(0) | X_i]] + E[E[Y_i(0) R_i(0) | X_i] E[R_i(1) | X_i]]~.
\end{equation}
The conclusion then follows from \eqref{eq:Bn}, \eqref{eq:Cn}, as well as the continuous mapping theorem.
\qed

\subsection{Proof of Theorem \ref{thm:covariate-adaptive}}
First we argue that $\hat{\theta}_n \stackrel{P}{\to} \theta^{\rm obs}$. Note that
\[\hat{\theta}_n = \frac{\sum_{1 \leq i \leq n} Y_i(1) R_i(1) D_i}{\sum_{1 \leq i \leq n} R_i(1) D_i} - \frac{\sum_{1 \leq i \leq n} Y_i(0) R_i(0) (1 - D_i)}{\sum_{1 \leq i \leq n} R_i(0) (1 - D_i)}~.\]
By Lemma B.3 in \cite{bugni2018inference},
\[\frac{1}{n}\sum_{1 \le i \le n}Y_i(1) R_i(1) D_i \stackrel{P}{\to} \nu E[Y_i(1)R_i(1)]~,\]
where we note that an inspection of their proof shows that Assumption \ref{as:strat}(b) is sufficient to establish their result. Similarly,
\begin{align*}
    &\frac{1}{n}\sum_{1 \leq i \leq n}R_i(1)D_i \stackrel{P}{\to} \nu E[R_i(1)]~, \\
    &\frac{1}{n}\sum_{1 \leq i \leq n} Y_i(0) R_i(0) (1 - D_i)  \stackrel{P}{\to} (1 - \nu) E[Y_i(0)R_i(0)]~, \\
    &\frac{1}{n}\sum_{1 \leq i \leq n}R_i(0)D_i \stackrel{P}{\to} (1 - \nu)E[R_i(0)]~.
\end{align*}
Hence the result follows by the continuous mapping theorem. Next we argue that $\hat{\theta}_n^{\rm sfe} \stackrel{P}{\to} \theta^{\rm sfe}$. To that end, write $\hat{\theta}_n^{\rm sfe}$ as 
\begin{equation*}
    \hat{\theta}_n^{\rm sfe} = \frac{\sum_{1\leq i\leq n} R_i \Tilde{D}_i Y_i}{\sum_{1\leq i\leq n} R_i \Tilde{D}_i^2}~,
\end{equation*}
where $\Tilde{D}_i$ is the projection of $D_i$ on the strata indicators, i.e., $\Tilde{D}_i = D_i - n_1(S_i)/n(S_i)$, and
\begin{equation*}
    \frac{n_1(S_i)}{n(S_i)} = \sum_{s \in \mathcal{S}}  I \{S_i = s\} \frac{n_1(s)}{n(s)}~,
\end{equation*}
for
\begin{equation*}
    n_1(s) = \sum_{1\leq i \leq n }   R_i D_i I \{S_i = s\}, \quad n(s) = \sum_{1\leq i \leq n } R_i  I \{S_i = s\} ~.
\end{equation*}
By Lemma B.3 in \cite{bugni2018inference} and the continuous mapping theorem, we have 
\begin{align*}
    \frac{n_1(s)}{n(s)} = \frac{\frac{1}{n}\sum_{1\leq i \leq n }  R_i D_i I \{S_i = s\}}{\frac{1}{n}\sum_{1\leq i \leq n }  R_i I \{S_i = s\}} & \xrightarrow{P}  \frac{\nu E[R_i(1)   I \{S_i = s\}]}{\nu E[R_i(1) I \{S_i = s\}] + (1 - \nu) E[R_i(0) I \{S_i = s\}]} \\
    & = \frac{\nu E[R_i(1) | S_i = s]}{\nu E[R_i(1) | S_i = s] + (1 - \nu) E[R_i(0) | S_i = s]}~.
\end{align*}
Similarly,
\begin{align*}
    &\frac{1}{n} \sum_{1\leq i \leq n} R_i \Tilde{D}_i Y_i \\
    &= \frac{1}{n} \sum_{1\leq i \leq n} R_i D_i Y_i - \sum_{s \in \mathcal S} \frac{1}{n} \sum_{1\leq i \leq n} I \{S_i = s\}  R_i  Y_i \frac{\nu E[R_i(1) | S_i = s]}{\nu E[R_i(1) | S_i = s] + (1 - \nu) E[R_i(0) | S_i = s]} \\
    & \hspace{3em} +  \sum_{s \in \mathcal{S}} \frac{1}{n} \sum_{1\leq i \leq n} R_i Y_i I\{S_i=s\}  \left(\frac{\nu E[R_i(1) | S_i = s]}{\nu E[R_i(1) | S_i = s] + (1 - \nu) E[R_i(0) | S_i = s]}-\frac{n_1(s)}{n(s)} \right)\\
    &\xrightarrow{P} \nu E[R_i(1)Y_i(1)] - \sum_{s \in \mathcal S} (\nu E[R_i(1) Y_i(1) I \{S_i = s\}] + (1 - \nu) E[R_i(0) Y_i(0) I \{S_i = s\}]) \\
    & \hspace{3em} \times \frac{\nu E[R_i(1) | S_i = s]}{\nu E[R_i(1) | S_i = s] + (1 - \nu) E[R_i(0) | S_i = s]} \\
    & = \nu E[R_i(1)Y_i(1)] - \sum_{s \in \mathcal S} p(s) (\nu E[R_i(1) Y_i(1) | S_i = s] + (1 - \nu) E[R_i(0) Y_i(0) | S_i = s]) \\
    & \hspace{3em} \times \frac{\nu E[R_i(1) | S_i = s]}{\nu E[R_i(1) | S_i = s] + (1 - \nu) E[R_i(0) | S_i = s]} \\
    & = \nu E[R_i(1)Y_i(1)] \\
    & \hspace{3em} - E \left [ (\nu E[R_i(1) Y_i(1) | S_i] + (1 - \nu) E[R_i(0) Y_i(0) | S_i]) \frac{\nu E[R_i(1) | S_i]}{\nu E[R_i(1) | S_i] + (1 - \nu) E[R_i(0) | S_i]} \right ] \\
    & = \nu (1 - \nu) E \left [ \frac{E[R_i(1) Y_i(1) | S_i] E[R_i(0) | S_i] - E[R_i(0) Y_i(0) | S_i] E[R_i(1) | S_i]}{\nu E[R_i(1) | S_i] + (1 - \nu) E[R_i(0) | S_i]} \right ]~,
\end{align*}
where in the last equality we used the fact that $E[R_i(1) Y_i(1)] = E[E[R_i(1) Y_i(1) | S_i]]$. Also note that
\begin{align*}
    &\frac{1}{n}\sum_{1\leq i\leq n} R_i \Tilde{D}_i^2\\
    &= \frac{1}{n}\sum_{1\leq i\leq n} R_i \Tilde{D}_i \left(D_i - \frac{n_1(S_i)}{n(S_i)} \right) \\
    &= \frac{1}{n}\sum_{1\leq i\leq n} R_i \left(1 - \frac{n_1(S_i)}{n(S_i)} \right) D_i  \\
    &= \frac{1}{n}\sum_{1\leq i\leq n} R_i D_i  - \sum_{s \in \mathcal{S}} \frac{n_1(s)}{n(s)} \frac{1}{n}\sum_{1\leq i\leq n} R_i D_i   I \{S_i = s\} \\
    & \stackrel{P}{\to} \nu E[R_i(1)] - \sum_{s \in \mathcal S} p(s) \nu E[R_i(1) | S_i = s] \frac{\nu E[R_i(1) | S_i = s]}{\nu E[R_i(1) | S_i = s] + (1 - \nu) E[R_i(0) | S_i = s]} \\
    & = \nu E[R_i(1)] - \nu E \left [ E[R_i(1) | S_i] \frac{\nu E[R_i(1) | S_i]}{\nu E[R_i(1) | S_i] + (1 - \nu) E[R_i(0) | S_i]} \right ] \\
    & = \nu (1 - \nu) E \left [ \frac{E[R_i(1) | S_i] E[R_i(0) | S_i]}{\nu E[R_i(1) | S_i] + (1 - \nu) E[R_i(0) | S_i]} \right ]~,
\end{align*}
where the second equality follows from $\sum_{1\leq i\leq n} R_i \Tilde{D}_i\frac{n_1(S_i)}{n(S_i)} = 0$, which is derived as follows:
\begin{align*}
     &\sum_{1\leq i\leq n} R_i \Tilde{D}_i\frac{n_1(S_i)}{n(S_i)} = \sum_{1\leq i\leq n} R_i \Tilde{D}_i \sum_{s \in \mathcal{S}}  I \{S_i = s\} \frac{n_1(s)}{n(s)} = \sum_{s \in \mathcal{S}}  \frac{n_1(s)}{n(s)}\sum_{1\leq i\leq n} R_i \Tilde{D}_i   I \{S_i = s\} \\
     &= \sum_{s \in \mathcal{S}}  \frac{n_1(s)}{n(s)}\sum_{1\leq i\leq n} R_i D_i   I \{S_i = s\} - \sum_{s \in \mathcal{S}}  \frac{n_1(s)}{n(s)}\sum_{1\leq i\leq n} R_i I \{S_i = s\}  \frac{n_1(S_i)}{n(S_i)} \\
     &= \sum_{s \in \mathcal{S}}  \frac{n_1(s)}{n(s)} n_1(s) - \sum_{s \in \mathcal{S}}  \frac{n_1(s)}{n(s)}\sum_{1\leq i\leq n} R_i I \{S_i = s\}  \sum_{k \in \mathcal{S}}  I \{S_i = k\} \frac{n_1(k)}{n(k)} \\
     &= \sum_{s \in \mathcal{S}} \frac{n_1(s)^2}{n(s)} - \sum_{s \in \mathcal{S}}  \frac{n_1(s)}{n(s)}\sum_{1\leq i\leq n} R_i I \{S_i = s\} \frac{n_1(s)}{n(s)} \\
     &= \sum_{s \in \mathcal{S}}  \frac{n_1(s)^2}{n(s)} - \sum_{s \in \mathcal{S}}  \frac{n_1(s)}{n(s)} n(s) \frac{n_1(s)}{n(s)} = 0~.
\end{align*}
The conclusion then follows from the continuous mapping theorem. \qed

\subsection{The Limiting Distribution of $\hat \theta_n$}\label{sec:norm_theta}
\begin{theorem} \label{thm:pair-distr}
Suppose $Q$ satisfies Assumption \ref{as:Q} (as well as $E[Y_i^2(d)] < \infty$) and Assumption \ref{as:Q-lip} (as well as $E[Y_i^2(d)R_i(d)|X_i = x]$ is Lipschitz for $d \in \{0, 1\}$), and the treatment assignment mechanism satisfies Assumptions \ref{as:mp}, \ref{as:close} as well as
\[ \frac{1}{n} \sum_{1 \leq j \leq n} ||X_{\pi(2j -1)} - X_{\pi(2j)}||^2 \stackrel{P}{\to} 0~. \]
Then, as $n \to \infty$,
\[ \sqrt n(\hat \theta_n - \theta(Q)) \stackrel{d}{\to} N(0, \varsigma_{\rm mp}^2)~, \]
where
\[ \varsigma_{\rm mp}^2 = \var[\tilde Y_i(1)] + \var[\tilde Y_i(0)] - \frac{1}{2} E[E[\tilde Y_i(1) + \tilde Y_i(0) | X_i]^2] \]
and
\[ \tilde Y_i(d) = \frac{R_i(d)}{E[R_i(d)]} \left ( Y_i(d) - \frac{E[Y_i(d) R_i(d)]}{E[R_i(d)]} \right ) \]
for $d \in \{0, 1\}$.
\end{theorem}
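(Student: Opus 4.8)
The plan is to reduce $\sqrt n(\hat\theta_n-\theta^{\rm obs})$ (where $\theta^{\rm obs}=\theta(Q)$ is the probability limit established in Theorem~\ref{thm:pair}) to a difference-in-means statistic in transformed potential outcomes, and then invoke the matched-pairs central limit theorem of \cite{bai2021inference}. As in the proof of Theorem~\ref{thm:pair}, I would write $\hat\theta_n=\hat a_n/\hat b_n-\hat c_n/\hat e_n$, where $\hat a_n=\frac2n\sum_{1\le i\le n}Y_i(1)R_i(1)D_i$, $\hat b_n=\frac2n\sum_{1\le i\le n}R_i(1)D_i$, and $\hat c_n,\hat e_n$ are the analogous control-group averages, with population limits $(a,b,c,e)=(E[Y_i(1)R_i(1)],E[R_i(1)],E[Y_i(0)R_i(0)],E[R_i(0)])$ so that $\theta^{\rm obs}=a/b-c/e$. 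A first-order expansion of the smooth map $(a,b,c,e)\mapsto a/b-c/e$ around these limits, after collecting coefficients and using $R_i(d)\in\{0,1\}$ (so that $\frac{R_i(1)}{b}(Y_i(1)-a/b)=\tilde Y_i(1)$, and similarly for $d=0$), yields
\[
\hat\theta_n-\theta^{\rm obs}=\frac2n\sum_{1\le i\le n}\bigl(\tilde Y_i(1)D_i-\tilde Y_i(0)(1-D_i)\bigr)+\mathcal R_n~,
\]
where $\mathcal R_n$ is a finite linear combination of products of the centered averages $\hat a_n-a,\ \hat b_n-b,\ \hat c_n-c,\ \hat e_n-e$. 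Note that $E[\tilde Y_i(d)]=0$ by construction, so the leading term is already centered, and since exactly $n/2$ units are treated it is exactly the difference-in-means estimator one would compute in this matched-pair design with $\tilde Y_i(1),\tilde Y_i(0)$ in the role of potential outcomes.

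Next I would show $\mathcal R_n=o_P(n^{-1/2})$. Each of $\hat a_n-a,\hat b_n-b,\hat c_n-c,\hat e_n-e$ is $O_P(n^{-1/2})$: consistency to $(a,b,c,e)$ is already contained in Theorem~\ref{thm:pair}, and the $n^{-1/2}$ rate follows from a second-moment bound obtained by conditioning on $(X^{(n)},\pi)$. Under Assumption~\ref{as:mp} the within-pair assignments are i.i.d.\ uniform over $\{(0,1),(1,0)\}$, so the conditional variance of each such average is $O(1/n)$, while the $W_i$ are i.i.d.\ with finite second moments because $R_i(d)\le 1$ and $E[Y_i^2(d)]<\infty$. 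Hence $\mathcal R_n$, a sum of products of two such terms, is $O_P(1/n)=o_P(n^{-1/2})$.

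It then remains to show $\sqrt n\cdot\frac2n\sum_{1\le i\le n}\bigl(\tilde Y_i(1)D_i-\tilde Y_i(0)(1-D_i)\bigr)\stackrel{d}{\to}N(0,\varsigma_{\rm mp}^2)$, which I would obtain by applying the matched-pairs CLT of \cite{bai2021inference} with $\tilde Y_i(d)$ replacing $Y_i(d)$. The hypotheses to verify are: $(\tilde Y_i(1),\tilde Y_i(0),X_i)$ is i.i.d., as $\tilde Y_i(d)$ is a fixed measurable transform of $(Y_i(d),R_i(d))$; $E[\tilde Y_i^2(d)]<\infty$, from $E[Y_i^2(d)]<\infty$ and $R_i(d)\le 1$; $x\mapsto E[\tilde Y_i^2(d)\mid X_i=x]$ is Lipschitz, which holds because $R_i(d)^2=R_i(d)$ makes $\tilde Y_i^2(d)$ a fixed-coefficient affine combination of $Y_i^2(d)R_i(d)$, $Y_i(d)R_i(d)$ and $R_i(d)$, whose conditional expectations are Lipschitz by the maintained assumptions; and the pairing satisfies Assumption~\ref{as:close} together with its squared analogue assumed here. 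The cited theorem then delivers convergence to a normal whose variance is the standard matched-pairs expression evaluated at $\tilde Y_i(d)$, namely $\var[\tilde Y_i(1)]+\var[\tilde Y_i(0)]-\frac12 E[E[\tilde Y_i(1)+\tilde Y_i(0)\mid X_i]^2]$ — the centering term in that formula vanishing because $E[\tilde Y_i(d)]=0$ — which is exactly $\varsigma_{\rm mp}^2$. Combining with $\mathcal R_n=o_P(n^{-1/2})$ and Slutsky's theorem completes the argument.

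The bulk of the work is packaged in the cited CLT, so the only genuinely new ingredients are the $n^{-1/2}$ rate needed to discard $\mathcal R_n$ and the verification that the regularity conditions transfer from $Y_i(d)$ to the ratio-type transform $\tilde Y_i(d)$. I expect the former — which requires a careful conditional-variance calculation for pairwise sums — to be the main obstacle; the latter is essentially bookkeeping that hinges on the idempotence $R_i(d)^2=R_i(d)$.
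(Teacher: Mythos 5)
Your proposal is correct, but it reaches the conclusion by a different route than the paper. The paper first proves a joint central limit theorem from scratch for the four normalized sample moments $\bigl(\tfrac1{\sqrt n}\sum_i (Y_i(1)R_i(1)-E[Y_i(1)R_i(1)])D_i,\dots\bigr)$ (its Lemma \ref{lem:L}, established by splitting each sum into an idiosyncratic part handled with a Lindeberg CLT conditional on $(X^{(n)},D^{(n)})$ and a conditional-mean part handled with the ordinary CLT), and only then applies the multivariate delta method to $g(x,y,z,w)=x/y-z/w$, recovering $\varsigma_{\rm mp}^2$ by direct calculation from the resulting $4\times 4$ covariance matrix. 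You instead linearize $g$ by hand, observe that the exact influence function collapses (via $\sum_i D_i=n/2$ and $R_i(d)^2=R_i(d)$) to the difference-in-means statistic in the transformed outcomes $\tilde Y_i(d)$, and then invoke the already-available scalar matched-pairs CLT of \cite{bai2021inference} for those outcomes, checking that the moment and Lipschitz hypotheses transfer. Your route avoids proving the joint CLT — the bulk of the paper's appendix — and delivers the variance formula immediately in the stated form, at the price of relying on the cited theorem applying verbatim to $\tilde Y_i(d)$; your verification of that (idempotence of $R_i(d)$ making $E[\tilde Y_i(d)\mid X_i]$ and $E[\tilde Y_i^2(d)\mid X_i]$ fixed linear combinations of the assumed-Lipschitz conditional moments) is the right one. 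Two small points of care: your description of $\mathcal R_n$ as a linear combination of products of centered averages glosses over the random denominators $\hat b_n,\hat e_n$, which is harmless only because Assumption \ref{as:Q}(b) bounds their limits away from zero; and the $O_P(n^{-1/2})$ rate for each centered average does require the conditional-variance decomposition you allude to (the within-pair imbalance term is $o_P(1/n)$ only via the Lipschitz assumptions together with the squared pairwise-distance condition), so that step is not purely a second-moment triviality — but it does go through under the stated hypotheses.
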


\begin{remark}
Following arguments similar to those in \cite{bai2023inference}, we can construct a consistent estimator of $\varsigma^2_{\rm mp}$. To that end, consider the observed adjusted outcome defined as:
\begin{align*}
    \hat Y_{i} = \frac{R_i}{ \frac{1}{n}\sum_{1\leq j \leq 2n} R_j I\{D_j = D_i\}}\left(Y_{i}-\frac{\frac{1}{n}\sum_{1\leq j \leq 2n} Y_{j} I\{D_j = D_i\} R_j}{\frac{1}{n}\sum_{1\leq j \leq 2n} I\{D_j = D_i\} R_j}\right)~,
\end{align*}
We then propose the following variance estimator:
\begin{equation}\label{eqn:define-variance-estimator}
    \hat{v}_{n}^{2}=\hat{\tau}_{n}^{2}-\frac{1}{2}\hat{\lambda}_{n}^{2}~,
\end{equation}
where
\begin{align*}
\hat{\tau}_{n}^{2}=& \frac{1}{n} \sum_{1 \leq j \leq n}\left(\hat Y_{\pi(2 j)}-\hat Y_{\pi(2 j-1)}\right)^{2} \\
\hat{\lambda}_{n}^{2}=& \frac{2}{n} \sum_{1 \leq j \leq\lfloor n/2\rfloor} \left(\hat Y_{\pi(4 j-3)}-\hat Y_{\pi(4 j-2)}\right) \left(\hat Y_{\pi(4 j-1)}- \hat Y_{\pi(4j)} \right)(D_{\pi(4 j-3)}-D_{\pi(4 j-2)})(D_{\pi(4 j-1)}-D_{\pi(4 j)})~.
\end{align*}
It follows from similar arguments to those used in \cite{bai2023inference} that under appropriate assumptions $\hat v_n^2 \stackrel{P}{\to} \varsigma_{\rm mp}^2$.
\end{remark}

\begin{proof}[\sc Proof of Theorem \ref{thm:pair-distr}]
To begin, note
\[ \hat \theta_n = \frac{\frac{1}{n}\sum_{1 \leq i \leq 2n} Y_i(1) R_i(1) D_i}{\frac{1}{n}\sum_{1 \leq i \leq 2n} R_i(1) D_i} - \frac{\frac{1}{n}\sum_{1 \leq i \leq 2n} Y_i(0) R_i(0) (1 - D_i)}{\frac{1}{n}\sum_{1 \leq i \leq 2n} R_i(0) (1 - D_i)}~. \]
Next, note by Assumption \ref{as:mp} that
\[ \sqrt n \Big ( \frac{1}{n}\sum_{1 \leq i \leq 2n} Y_i(1) R_i(1) D_i - E[Y_i(1) R_i(1)] \Big ) = \frac{1}{\sqrt n} \sum_{1 \leq i \leq 2n} (Y_i(1) R_i(1) D_i - E[Y_i(1) R_i(1)] D_i) \]
and similarly for the other three terms. The desired conclusion then follows from Lemma \ref{lem:L} together with an application of the delta method. In particular, for $g(x, y, z, w) = \frac{x}{y} - \frac{z}{w}$, observe that
\begin{align*}
\hat \theta_n &= g \left( \frac{1}{n}\sum_{1 \leq i \leq 2n} Y_i(1) R_i(1) D_i, \frac{1}{n}\sum_{1 \leq i \leq 2n} R_i(1) D_i, \frac{1}{n}\sum_{1 \leq i \leq 2n} Y_i(0) R_i(0) (1 - D_i), \frac{1}{n}\sum_{1 \leq i \leq 2n} R_i(0) (1 - D_i) \right)
\end{align*}
and the Jacobian is
\[ D_g(x, y, z, w) = \Big ( \frac{1}{y}, - \frac{x}{y^2}, - \frac{1}{w}, \frac{z}{w^2} \Big )~. \]
Note by the laws of total variance and total covariance that $\mathbb V$ in Lemma \ref{lem:L} is symmetric with entries
\begin{align*}
\mathbb V_{11} & = \var[Y_i(1) R_i(1)] - \frac{1}{2} \var[E[Y_i(1) R_i(1) | X_i]] \\
\mathbb V_{12} & = \cov[Y_i(1) R_i(1), R_i(1)] - \frac{1}{2} \cov[E[Y_i(1) R_i(1) | X_i], E[R_i(1) | X_i]] \\
\mathbb V_{13} & = \frac{1}{2} \cov[E[Y_i(1) R_i(1) | X_i], E[Y_i(0) R_i(0) | X_i]] \\
\mathbb V_{14} & = \frac{1}{2} \cov[E[Y_i(1) R_i(1) | X_i], E[R_i(0) | X_i]] \\
\mathbb V_{22} & = \var[R_i(1)] - \frac{1}{2} \var[E[R_i(1) | X_i]] \\
\mathbb V_{23} & = \frac{1}{2} \cov[E[R_i(1) | X_i], E[Y_i(0) R_i(0) | X_i]] \\
\mathbb V_{24} & = \frac{1}{2} \cov[E[R_i(1) | X_i], E[R_i(0) | X_i]] \\
\mathbb V_{33} & = \var[Y_i(0) R_i(0)] - \frac{1}{2} \var[E[Y_i(0) R_i(0) | X_i]] \\
\mathbb V_{34} & = \cov[Y_i(0) R_i(0), R_i(0)] - \frac{1}{2} \cov[E[Y_i(0) R_i(0) | X_i], E[R_i(0) | X_i]] \\
\mathbb V_{44} & = \var[R_i(0)] - \frac{1}{2} \var[E[R_i(0) | X_i]]~.
\end{align*}
The conclusion of the theorem then follows from direct calculation.
\end{proof}

\begin{lemma} \label{lem:L}
Suppose $Q$ satisfies Assumption \ref{as:Q} (as well as $E[Y_i^2(d)] < \infty$) and Assumption \ref{as:Q-lip} (as well as $E[Y_i^2(d)R_i(d)|X_i = x]$ is Lipschitz for $d \in \{0, 1\}$), and the treatment assignment mechanism satisfies Assumptions \ref{as:mp}, \ref{as:close} as well as
\begin{equation}\label{eq:close_2}
\frac{1}{n} \sum_{1 \leq j \leq n} \|X_{\pi(2j -1)} - X_{\pi(2j)}\|^2 \stackrel{P}{\to} 0~. 
\end{equation}
Define
\begin{align*}
\mathbb L_n^{\rm YA1} & = \frac{1}{\sqrt n} \sum_{1 \leq i \leq 2n} (Y_i(1) R_i(1) D_i - E[Y_i(1) R_i(1)] D_i) \\
\mathbb L_n^{\rm A1} & = \frac{1}{\sqrt n} \sum_{1 \leq i \leq 2n} (R_i(1) D_i - E[R_i(1)] D_i) \\
\mathbb L_n^{\rm YA0} & = \frac{1}{\sqrt n} \sum_{1 \leq i \leq 2n} (Y_i(0) R_i(0) (1 - D_i) - E[Y_i(0) R_i(0)] (1 - D_i)) \\
\mathbb L_n^{\rm A0} & = \frac{1}{\sqrt n} \sum_{1 \leq i \leq 2n} (R_i(0) (1 - D_i) - E[R_i(0)] (1 - D_i))~.
\end{align*}
Then, as $n \to \infty$,
\[ (\mathbb L_n^{\rm YA1}, \mathbb L_n^{\rm A1}, \mathbb L_n^{\rm YA0}, \mathbb L_n^{\rm A0})' \stackrel{d}{\to} N(0, \mathbb V)~, \] where
\[ \mathbb V = \mathbb V_1 + \mathbb V_2 \]
for
\[ \mathbb V_1 = \begin{pmatrix}
\mathbb V_1^1 & 0 \\
0 & \mathbb V_1^0
\end{pmatrix} \]
\begin{align*}
\mathbb V_1^1 & = \begin{pmatrix}
E[\var[Y_i(1) R_i(1) | X_i]] & E[\cov[Y_i(1) R_i(1), R_i(1) | X_i]] \\
E[\cov[Y_i(1) R_i(1), R_i(1) | X_i]] & E[\var[R_i(1) | X_i]]
\end{pmatrix} \\
\mathbb V_1^0 & = \begin{pmatrix}
E[\var[Y_i(0) R_i(0) | X_i]] & E[\cov[Y_i(0) R_i(0), R_i(0) | X_i]] \\
E[\cov[Y_i(0) R_i(0), R_i(0) | X_i]] & E[\var[R_i(0) | X_i]]
\end{pmatrix}
\end{align*}
\[ \mathbb V_2 = \frac{1}{2} \var[(E[Y_i(1) R_i(1) | X_i], E[R_i(1) | X_i], E[Y_i(0) R_i(0) | X_i], E[R_i(0) | X_i])']~. \]
\end{lemma}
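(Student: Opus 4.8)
The plan is to apply the Cram\'er--Wold device: fix $t = (t_1, t_2, t_3, t_4)' \in \mathbb R^4$ and show that $\Lambda_n := t_1 \mathbb L_n^{\rm YA1} + t_2 \mathbb L_n^{\rm A1} + t_3 \mathbb L_n^{\rm YA0} + t_4 \mathbb L_n^{\rm A0}$ satisfies $\Lambda_n \stackrel{d}{\to} N(0, t'\mathbb V t)$. Writing $U_i = (Y_i(1)R_i(1), R_i(1), Y_i(0)R_i(0), R_i(0))'$ and $\mu = E[U_i]$, set $A_i = t_1(U_{i,1} - \mu_1) + t_2(U_{i,2} - \mu_2)$, $B_i = t_3(U_{i,3} - \mu_3) + t_4(U_{i,4} - \mu_4)$, and $C_i = A_i - B_i$, so that $\Lambda_n = \frac{1}{\sqrt n} \sum_{1 \le i \le 2n}(A_i D_i + B_i(1 - D_i))$. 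Because exactly one unit in each pair is treated, the contribution of pair $j$ equals $A_{\pi(2j-1)} + B_{\pi(2j)}$ when $D_{\pi(2j-1)} = 1$ and $A_{\pi(2j)} + B_{\pi(2j-1)}$ when $D_{\pi(2j-1)} = 0$, and by Assumption \ref{as:mp} these two cases occur with conditional probability $1/2$ given $W^{(n)}$ (recall $\pi$ is a function of $X^{(n)}$). Subtracting the within-pair conditional means therefore yields $\Lambda_n = \Lambda_n^{(1)} + \Lambda_n^{(2)}$ with $\Lambda_n^{(1)} = \frac{1}{2\sqrt n}\sum_{1 \le i \le 2n}(A_i + B_i)$ and $\Lambda_n^{(2)} = \frac{1}{\sqrt n}\sum_{1 \le j \le n} \tfrac{\epsilon_j}{2}(C_{\pi(2j-1)} - C_{\pi(2j)})$, where $\epsilon_j = 2 D_{\pi(2j-1)} - 1$.

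First I would dispatch $\Lambda_n^{(1)}$: the summands $A_i + B_i = t'(U_i - \mu)$ are i.i.d.\ with mean zero and, under Assumption \ref{as:Q} and the added second-moment conditions, finite variance, so the classical CLT gives $\Lambda_n^{(1)} \stackrel{d}{\to} N(0, \tfrac12 \var[A_i + B_i])$. For $\Lambda_n^{(2)}$ I would prove a conditional CLT given $W^{(n)}$. By Assumption \ref{as:mp}, conditionally on $W^{(n)}$ the $\epsilon_j$ are i.i.d.\ Rademacher and independent of the $C_i$, so the summands are conditionally independent and mean zero with conditional variances $\tfrac14(C_{\pi(2j-1)} - C_{\pi(2j)})^2$. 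Expanding this square, the weak LLN gives $\frac{1}{2n}\sum_{1 \le i \le 2n} C_i^2 \stackrel{P}{\to} E[C_i^2]$, while the matched-pair LLN of \cite{bai2021inference} — applied coordinate-by-coordinate to $U_i$ exactly as in the treatment of $\mathbb B_n$ and $\mathbb C_n$ in the proof of Theorem \ref{thm:pair}, which is where Assumption \ref{as:close}, the Lipschitz conditions in Assumption \ref{as:Q-lip}, the added $L^2$ smoothness conditions (note $E[(Y_i(d)R_i(d))^2\mid X_i] = E[Y_i^2(d)R_i(d)\mid X_i]$), and \eqref{eq:close_2} enter — gives $\frac1n\sum_{1 \le j \le n} C_{\pi(2j-1)} C_{\pi(2j)} \stackrel{P}{\to} E[E[C_i \mid X_i]^2]$; subtracting, the conditional variance of $\Lambda_n^{(2)}$ converges in probability to $\tfrac12 E[\var[C_i \mid X_i]]$. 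The conditional Lindeberg condition holds because each summand is bounded in absolute value by $\max_{1 \le i \le 2n}|C_i|$, which is $o_P(\sqrt n)$ since $E[C_i^2] < \infty$. A standard expansion of the conditional characteristic function then yields $E[\exp(\mathrm i s \Lambda_n^{(2)}) \mid W^{(n)}] \stackrel{P}{\to} \exp(-\tfrac{s^2}{4} E[\var[C_i \mid X_i]])$.

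To combine the two pieces, note that $\Lambda_n^{(1)}$ is $\sigma(W^{(n)})$-measurable, so $E[\exp(\mathrm i s \Lambda_n)] = E[\exp(\mathrm i s \Lambda_n^{(1)}) \, E[\exp(\mathrm i s \Lambda_n^{(2)}) \mid W^{(n)}]]$; replacing the inner conditional characteristic function by its deterministic limit, with an error that vanishes by bounded convergence, and then invoking the CLT for $\Lambda_n^{(1)}$, gives $\Lambda_n \stackrel{d}{\to} N(0, \tfrac12 \var[A_i + B_i] + \tfrac12 E[\var[C_i \mid X_i]])$. It remains to check that this variance equals $t'\mathbb V t$. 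With $M := E[\var[U_i \mid X_i]]$, the law of total variance gives $\tfrac12\var[A_i + B_i] = \tfrac12 t' M t + t'\mathbb V_2 t$, while $\var[C_i \mid X_i] = \tilde t' \var[U_i \mid X_i] \tilde t$ for $\tilde t := (t_1, t_2, -t_3, -t_4)'$ gives $\tfrac12 E[\var[C_i \mid X_i]] = \tfrac12 \tilde t' M \tilde t$. Decomposing $t = p + q$ with $p = (t_1, t_2, 0, 0)'$ and $q = (0, 0, t_3, t_4)'$ (so $\tilde t = p - q$), symmetry of $M$ gives $\tfrac12 t' M t + \tfrac12 \tilde t' M \tilde t = p' M p + q' M q = t'\mathbb V_1 t$, since the cross-block covariances cancel; adding yields $t'(\mathbb V_1 + \mathbb V_2) t = t'\mathbb V t$. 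An appeal to Cram\'er--Wold completes the proof. The principal obstacle is the cross-pair-product limit $\frac1n\sum_j C_{\pi(2j-1)} C_{\pi(2j)} \stackrel{P}{\to} E[E[C_i \mid X_i]^2]$: this is exactly where Assumption \ref{as:close} and the $L^2$ refinements are essential, ensuring that the limiting variance of the treatment-assignment noise $\Lambda_n^{(2)}$ retains only the within-$X_i$ variation of $C_i$ rather than its full variance.
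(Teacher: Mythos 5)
Your proposal is correct, but it takes a genuinely different route from the paper's proof. The paper works with the full random vector and decomposes each coordinate by conditioning on $(X^{(n)}, D^{(n)})$: the piece $\mathbb L_{1,n} = \frac{1}{\sqrt n}\sum_i (Y_i(1)R_i(1)D_i - E[Y_i(1)R_i(1)D_i \mid X^{(n)}, D^{(n)}])$ (and its analogues) is handled by a conditional Lindeberg CLT for random vectors and converges to $N(0,\mathbb V_1)$, while the remainder $\mathbb L_{2,n}$ is shown to be within $o_P(1)$ of an unconditional i.i.d.\ average of the conditional means $E[\cdot \mid X_i]$, yielding $N(0,\mathbb V_2)$; the two are combined via Lemma S.1.3 of \cite{bai2021inference}. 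You instead reduce to a scalar via Cram\'er--Wold and split along the randomization itself: an unconditional i.i.d.\ average $\Lambda_n^{(1)}$ with variance $\tfrac12 \var[t'U_i]$ (which mixes $E[\var[U_i\mid X_i]]$ and $\var[E[U_i\mid X_i]]$), plus a Rademacher-weighted sum of within-pair contrasts $\Lambda_n^{(2)}$ whose conditional variance converges to $\tfrac12 E[\var[C_i\mid X_i]]$; the block algebra $\tfrac12 t'Mt + \tfrac12\tilde t'M\tilde t = p'Mp + q'Mq = t'\mathbb V_1 t$ then recovers the stated $\mathbb V = \mathbb V_1 + \mathbb V_2$, and I verified this identity. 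What your route buys is a very clean probabilistic structure for the design noise (i.i.d.\ signs conditional on $W^{(n)}$, so conditional independence and the Lindeberg bound via $\max_i|C_i| = o_P(\sqrt n)$ are immediate) and a scalar rather than vector conditional CLT; what it costs is that the variance decomposition is no longer transparent and must be reassembled at the end, whereas the paper's conditioning produces $\mathbb V_1$ and $\mathbb V_2$ directly as the two sources of randomness. Both arguments ultimately rest on the same technical engine — the cross-pair product law of large numbers $\frac1n\sum_j g(X_{\pi(2j-1)})h(X_{\pi(2j)}) \stackrel{P}{\to} E[g(X_i)h(X_i)]$ under the Lipschitz conditions, Assumption \ref{as:close} and \eqref{eq:close_2} — and you correctly identify this as the crux and correctly note that $E[(Y_i(d)R_i(d))^2\mid X_i] = E[Y_i^2(d)R_i(d)\mid X_i]$ ties the added smoothness hypothesis to what the LLN needs. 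Two points you should make explicit if you write this up in full: the degenerate case $E[\var[C_i\mid X_i]] = 0$ needs the (trivial) separate treatment via Chebyshev that the paper gives, and the cross-product LLN for $\frac1n\sum_j C_{\pi(2j-1)}C_{\pi(2j)}$ requires the same truncation argument as Lemma S.1.6 of \cite{bai2021inference} to control its conditional variance, since only second moments of $C_i$ are assumed.
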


\begin{proof}[\sc Proof of Lemma \ref{lem:L}]
Note
\[ (\mathbb L_n^{\rm YA1}, \mathbb L_n^{\rm A1}, \mathbb L_n^{\rm YA0}, \mathbb L_n^{\rm A0}) = (\mathbb L_{1, n}^{\rm YA1}, \mathbb L_{1, n}^{\rm A1}, \mathbb L_{1, n}^{\rm YA0}, \mathbb L_{1, n}^{\rm A0}) + (\mathbb L_{2, n}^{\rm YA1}, \mathbb L_{2, n}^{\rm A1}, \mathbb L_{2, n}^{\rm YA0}, \mathbb L_{2, n}^{\rm A0})~, \]
where
\begin{align*}
\mathbb L_{1, n}^{\rm YA1} & = \frac{1}{\sqrt n} \sum_{1 \leq i \leq 2n} (Y_i(1) R_i(1) D_i - E[Y_i(1) R_i(1) D_i | X^{(n)}, D^{(n)}]) \\
\mathbb L_{2, n}^{\rm YA1} & = \frac{1}{\sqrt n} \sum_{1 \leq i \leq 2n} (E[Y_i(1) R_i(1) D_i | X^{(n)}, D^{(n)}] - E[Y_i(1) R_i(1)] D_i)
\end{align*}
and similarly for the rest. Next, note $(\mathbb L_{1, n}^{\rm YA1}, \mathbb L_{1, n}^{\rm A1}, \mathbb L_{1, n}^{\rm YA0}, \mathbb L_{1, n}^{\rm A0}), n \geq 1$ is a triangular array of normalized sums of random vectors. We will apply the Lindeberg central limit theorem for random vectors, i.e., Proposition 2.27 of \cite{van_der_vaart1998asymptotic}, to this triangular array. Conditional on $X^{(n)}, D^{(n)}$, $(\mathbb L_{1, n}^{\rm YA1}, \mathbb L_{1, n}^{\rm A1}) \indep (\mathbb L_{1, n}^{\rm YA0}, \mathbb L_{1, n}^{\rm A0})$. Moreover, it follows from $Q_n = Q^{2n}$ and Assumption \ref{as:mp} that
\begin{multline*}
\var \left [ \begin{pmatrix}
\mathbb L_{1, n}^{\rm YA1} \\
\mathbb L_{1, n}^{\rm A1}
\end{pmatrix} \Bigg |  X^{(n)}, D^{(n)} \right ] \\
= \begin{pmatrix}
\frac{1}{n} \sum_{1 \leq i \leq 2n} \var[Y_i(1) R_i(1) | X_i] D_i & \frac{1}{n} \sum_{1 \leq i \leq 2n} \cov[Y_i(1) R_i(1), R_i(1) | X_i] D_i \\
\frac{1}{n} \sum_{1 \leq i \leq 2n} \cov[Y_i(1) R_i(1), R_i(1) | X_i] D_i & \frac{1}{n} \sum_{1 \leq i \leq 2n} \var[R_i(1) | X_i] D_i
\end{pmatrix}~.
\end{multline*}
For the upper left component, we have
\begin{equation} \label{eq:condlip}
\frac{1}{n} \sum_{1 \leq i \leq 2n} \var[Y_i(1) R_i(1) | X_i] D_i = \frac{1}{n} \sum_{1 \leq i \leq 2n} E[Y_i^2(1) R_i(1) | X_i] D_i - \frac{1}{n} \sum_{1 \leq i \leq 2n} E[Y_i(1) R_i(1) | X_i]^2 D_i~.
\end{equation}
Note
\begin{align*}
& \frac{1}{n} \sum_{1 \leq i \leq 2n} E[Y_i^2(1) R_i(1) | X_i] D_i \\
& = \frac{1}{2n} \sum_{1 \leq i \leq 2n} E[Y_i^2(1) R_i(1) | X_i] + \frac{1}{2} \Big ( \frac{1}{n} \sum_{1 \leq i \leq 2n: D_i = 1} E[Y_i^2(1) R_i(1) | X_i] \\
& \hspace{8cm} - \frac{1}{n} \sum_{1 \leq i \leq 2n: D_i = 0} E[Y_i^2(1) R_i(1) | X_i] \Big )~.
\end{align*}
It follows from the weak law of large numbers, the application of which is permitted by $E[Y_i^2(1)] < \infty$ and the fact that $R_i(1) \in \{0, 1\}$, that
\[ \frac{1}{2n} \sum_{1 \leq i \leq 2n} E[Y_i^2(1) R_i(1) | X_i] \stackrel{P}{\to} E[Y_i^2(1) R_i(1)]~. \]
On the other hand, it follows from Assumption \ref{as:Q-lip} and \ref{as:close} that
\begin{align*}
& \Big | \frac{1}{n} \sum_{1 \leq i \leq 2n: D_i = 1} E[Y_i^2(1) R_i(1) | X_i] - \frac{1}{n} \sum_{1 \leq i \leq 2n: D_i = 0} E[Y_i^2(1) R_i(1) | X_i] \Big | \\
& \leq \frac{1}{n} \sum_{1 \leq j \leq n} |E[Y_{\pi(2j - 1)}^2(1) A_{\pi(2j - 1)}(1) | X_{\pi(2j - 1)}] - E[Y_{\pi(2j)}^2(1) A_{\pi(2j)}(1) | X_{\pi(2j)}]| \\
& \lesssim \frac{1}{n} \sum_{1 \leq j \leq n} \|X_{\pi(2j - 1)} - X_{\pi(2j)}\| = o_P(1)~.
\end{align*}
Therefore,
\[ \frac{1}{n} \sum_{1 \leq i \leq 2n} E[Y_i^2(1) R_i(1) | X_i] D_i \stackrel{P}{\to} E[Y_i^2(1) R_i(1)]~. \]
Meanwhile,
\begin{align*}
& \frac{1}{n} \sum_{1 \leq i \leq 2n} E[Y_i(1) R_i(1) | X_i]^2 D_i \\
& = \frac{1}{2n} \sum_{1 \leq i \leq 2n} E[Y_i(1) R_i(1) | X_i]^2 + \frac{1}{2} \Big ( \frac{1}{n} \sum_{1 \leq i \leq 2n: D_i = 1} E[Y_i(1) R_i(1) | X_i]^2 \\
& \hspace{8cm} - \frac{1}{n} \sum_{1 \leq i \leq 2n: D_i = 0} E[Y_i(1) R_i(1) | X_i]^2 \Big )~.
\end{align*}
Jensen's inequality implies $E[E[Y_i(1) R_i(1) | X_i]^2] \leq E[Y_i^2(1) R_i(1)] < E[Y_i^2(1)] < \infty$, so it follows from the weak law of large numbers as above that
\[ \frac{1}{2n} \sum_{1 \leq i \leq 2n} E[Y_i(1) R_i(1) | X_i]^2 \stackrel{P}{\to} E[E[Y_i(1) R_i(1) | X_i]^2]~. \]
Next,
\begin{align*}
& \Big | \frac{1}{n} \sum_{1 \leq i \leq 2n: D_i = 1} E[Y_i(1) R_i(1) | X_i]^2 - \frac{1}{n} \sum_{1 \leq i \leq 2n: D_i = 0} E[Y_i(1) R_i(1) | X_i]^2 \Big | \\
& \leq \frac{1}{n} \sum_{1 \leq j \leq n} |E[Y_{\pi(2j - 1)}(1) A_{\pi(2j - 1)}(1) | X_{\pi(2j - 1)}] - E[Y_{\pi(2j)}(1) A_{\pi(2j)}(1) | X_{\pi(2j)}]| \\
& \hspace{5em} \times |E[Y_{\pi(2j - 1)}(1) A_{\pi(2j - 1)}(1) | X_{\pi(2j - 1)}] + E[Y_{\pi(2j)}(1) A_{\pi(2j)}(1) | X_{\pi(2j)}]| \\
& \lesssim \Big ( \frac{1}{n} \sum_{1 \leq j \leq n} \|X_{\pi(2j - 1)} - X_{\pi(2j)}\|^2 \Big )^{1/2} \\
& \hspace{5em} \times \Big (\frac{1}{n} \sum_{1 \leq j \leq n} |E[Y_{\pi(2j - 1)}(1) A_{\pi(2j - 1)}(1) | X_{\pi(2j - 1)}] + E[Y_{\pi(2j)}(1) A_{\pi(2j)}(1) | X_{\pi(2j)}]|^2 \Big )^{1/2} \\
& \lesssim \Big ( \frac{1}{n} \sum_{1 \leq j \leq n} \|X_{\pi(2j - 1)} - X_{\pi(2j)}\|^2 \Big )^{1/2} \\
& \hspace{5em} \times \Big (\frac{1}{n} \sum_{1 \leq j \leq n} (|E[Y_{\pi(2j - 1)}(1) A_{\pi(2j - 1)}(1) | X_{\pi(2j - 1)}]|^2 + |E[Y_{\pi(2j)}(1) A_{\pi(2j)}(1) | X_{\pi(2j)}]|^2) \Big )^{1/2} \\
& \leq \Big ( \frac{1}{n} \sum_{1 \leq j \leq n} \|X_{\pi(2j - 1)} - X_{\pi(2j)}\|^2 \Big )^{1/2} \Big ( \frac{1}{n} \sum_{1 \leq i \leq 2n} E[Y_i(1) R_i(1) | X_i]^2 \Big )^{1/2} = o_P(1)~,
\end{align*}
where the first inequality follows by inspection, the second follows from Assumption \ref{as:Q-lip} and the Cauchy-Schwarz inequality, the third follows from $(a + b)^2 \leq 2 a^2 + 2 b^2$, the last follows by inspection again, and the convergence in probability follows from \eqref{eq:close_2}. Therefore, it follows from \eqref{eq:condlip} that
\[ \frac{1}{n} \sum_{1 \leq i \leq 2n} \var[Y_i(1) R_i(1) | X_i] D_i \stackrel{P}{\to} E[\var[Y_i(1) R_i(1) | X_i]]~. \]
Similar arguments imply that
\[ \var \left [ \begin{pmatrix}
\mathbb L_{1, n}^{\rm YA1} \\
\mathbb L_{1, n}^{\rm A1}
\end{pmatrix} \Bigg |  X^{(n)}, D^{(n)} \right ] \stackrel{P}{\to} \mathbb V_1^1~. \]
Similarly,
\[ \var \left [ \begin{pmatrix}
\mathbb L_{1, n}^{\rm YA0} \\
\mathbb L_{1, n}^{\rm A0}
\end{pmatrix} \Bigg |  X^{(n)}, D^{(n)} \right ] \stackrel{P}{\to} \mathbb V_1^0~. \]
If $E[\var[Y_i(1) R_i(1) | X_i]] = E[\var[R_i(1) | X_i]] = E[\var[Y_i(0) R_i(0) | X_i]] = E[\var[R_i(0) | X_i]] = 0$, then it follows from Markov's inequality conditional on $X^{(n)}$ and $D^{(n)}$, and the fact that probabilities are bounded and hence uniformly integrable, that $(\mathbb L_{1, n}^{\rm YA1}, \mathbb L_{1, n}^{\rm A1}, \mathbb L_{1, n}^{\rm YA0}, \mathbb L_{1, n}^{\rm A0}) \stackrel{P}{\to} 0$. Otherwise, it follows from similar arguments to those in the proof of Lemma S.1.5 of \cite{bai2021inference} that
\begin{equation} \label{eq:cond}
\rho(\mathcal L((\mathbb L_{1, n}^{\rm YA1}, \mathbb L_{1, n}^{\rm A1}, \mathbb L_{1, n}^{\rm YA0}, \mathbb L_{1, n}^{\rm A0})' | X^{(n)}, D^{(n)}),  N(0, \mathbb V_1)) \stackrel{P}{\to} 0~,
\end{equation}
where $\mathcal L$ denotes the distribution and $\rho$ is any metric that metrizes weak convergence.

Next, we study $(\mathbb L_{2, n}^{\rm YA1}, \mathbb L_{2, n}^{\rm A1}, \mathbb L_{2, n}^{\rm YA0}, \mathbb L_{2, n}^{\rm A0})$. It follows from $Q_n = Q^{2n}$ and Assumption \ref{as:mp} that
\[ \begin{pmatrix}
\mathbb L_{2, n}^{\rm YA1} \\
\mathbb L_{2, n}^{\rm A1} \\
\mathbb L_{2, n}^{\rm YA0} \\
\mathbb L_{2, n}^{\rm A0}
\end{pmatrix} = \begin{pmatrix}
\frac{1}{\sqrt n} \sum_{1 \leq i \leq 2n} D_i (E[Y_i(1) R_i(1) | X_i] - E[Y_i(1) R_i(1)]) \\
\frac{1}{\sqrt n} \sum_{1 \leq i \leq 2n} D_i (E[R_i(1) | X_i] - E[R_i(1)]) \\
\frac{1}{\sqrt n} \sum_{1 \leq i \leq 2n} (1 - D_i) (E[Y_i(0) R_i(0) | X_i] - E[Y_i(0) R_i(0)]) \\
\frac{1}{\sqrt n} \sum_{1 \leq i \leq 2n} (1 - D_i) (E[R_i(0) | X_i] - E[R_i(0)])
\end{pmatrix}~. \]
For $\mathbb L_{2, n}^{\rm YA1}$, note it follows from Assumptions \ref{as:mp}, \ref{as:Q-lip} and \eqref{eq:close_2} that
\begin{align*}
\var[\mathbb L_{2, n}^{\rm YA1} | X^{(n)}] & = \frac{1}{4n} \sum_{1 \leq j \leq n} (E[Y_{\pi(2j - 1)}(1) A_{\pi(2j - 1)}(1) | X_{\pi(2j - 1)}] - E[Y_{\pi(2j)}(1) A_{\pi(2j)}(1) | X_{\pi(2j)}])^2 \\
& \lesssim \frac{1}{n} \sum_{1 \leq j \leq n} \|X_{\pi(2j - 1)} - X_{\pi(2j)}\|^2 \stackrel{P}{\to} 0~.
\end{align*}
Therefore, it follows from Markov's inequality conditional on $X^{(n)}$ and $D^{(n)}$, and the fact that probabilities are bounded and hence uniformly integrable, that
\[ \mathbb L_{2, n}^{\rm YA1} = E[\mathbb L_{2, n}^{\rm YA1} | X^{(n)}] + o_P(1)~. \]
Similarly,
\[ \begin{pmatrix}
\mathbb L_{2, n}^{\rm YA1} \\
\mathbb L_{2, n}^{\rm A1} \\
\mathbb L_{2, n}^{\rm YA0} \\
\mathbb L_{2, n}^{\rm A0}
\end{pmatrix} = \begin{pmatrix}
\frac{1}{2\sqrt n} \sum_{1 \leq i \leq 2n} (E[Y_i(1) R_i(1) | X_i] - E[Y_i(1) R_i(1)]) \\
\frac{1}{2\sqrt n} \sum_{1 \leq i \leq 2n} (E[R_i(1) | X_i] - E[R_i(1)]) \\
\frac{1}{2\sqrt n} \sum_{1 \leq i \leq 2n} (E[Y_i(0) R_i(0) | X_i] - E[Y_i(0) R_i(0)]) \\
\frac{1}{2\sqrt n} \sum_{1 \leq i \leq 2n} (E[R_i(0) | X_i] - E[R_i(0)])
\end{pmatrix} + o_P(1)~. \]
It then follows from Assumption \ref{as:Q}and the central limit theorem that
\[ (\mathbb L_{2, n}^{\rm YA1}, \mathbb L_{2, n}^{\rm A1}, \mathbb L_{2, n}^{\rm YA0}, \mathbb L_{2, n}^{\rm A0})' \stackrel{d}{\to} N(0, \mathbb V_2)~. \]
Because \eqref{eq:cond} holds and $(\mathbb L_{2, n}^{\rm YA1}, \mathbb L_{2, n}^{\rm A1}, \mathbb L_{2, n}^{\rm YA0}, \mathbb L_{2, n}^{\rm A0})$ is deterministic conditional on $X^{(n)}, D^{(n)}$, the conclusion of the theorem follows from Lemma S.1.3 in \cite{bai2021inference}.
\end{proof}

\subsection{A Numerical Example} \label{sec:numerical}
Let $X \sim N(0, 1)$ and $\epsilon = (\epsilon_Y(1), \epsilon_Y(0), \epsilon_R(1), \epsilon_R(0))' \sim N(0, \Sigma)$, where the diagonal elements of $\Sigma$ are $1$ and all off-diagonal elements are $-0.3$. Suppose for $d \in \{0, 1\}$,
\begin{align*}
    Y(d) & = \mu_d(X) + \epsilon_Y(d) \\
    R(d) & = I \{\epsilon_R(d) \leq \nu_d(X)\}~,
\end{align*}
with $\mu_d(x)$ and $\nu_d(x)$ specified below. In the following two examples, the values of $\theta$ can be calculated by hand, and the values of $\theta^{\rm obs}$ and $\theta^{\rm drop}$ are computed via simulation with $n = 10^6$ random draws.
\begin{enumerate}
    \item $\mu_1(x) = 2x$, $\mu_0(x) = x^3$, $\nu_1(x) = x$, $\nu_0(x) = x^2$. In this example, $\theta = 0$, $\theta^{\rm obs} \approx 1.17$, $\theta^{\rm drop} \approx -0.50$.
    \item $\mu_1(x) = 2x$, $\mu_0(x) = x$, $\nu_1(x) = x$, $\nu_0(x) = x$. In this example, $\theta = 0$, $\theta^{\rm obs} \approx 0.56$, $\theta^{\rm drop} \approx 0.86$.
\end{enumerate}

\subsection{Additional Details for Empirical Survey in Section \ref{sec:application-aer-aej}} \label{sec:addl-info-empirical}

\begin{table}[H]
  \centering
  \caption{Additional notes about each paper used in Figure \ref{fig:comparison}}
  \begin{adjustbox}{max width=\linewidth,center}
    \begin{tabular}{p{0.25\linewidth} p{0.3\linewidth} p{0.45\linewidth}} 
    \toprule
    Paper    & \multicolumn{1}{c}{Table Replicated} & \multicolumn{1}{c}{Additional Notes} \\
    \midrule
    \cite{dhar2022reshaping}     & Table 2: (1), (2) and (3) & Original specification features controls. Original estimates do not include strata fixed-effects. \\
    \midrule
    \cite{carter2021subsidies}   & Figure 2: left panel (``Direct impact on treatment group") &
      Original specification features controls. Original estimates include strata fixed-effects. We reported both ``During" and ``After" estimates. \\
    \midrule
    \cite{casaburi2021using}     & Table 2: (1)              & Original specification does not feature controls. Original estimate includes strata fixed-effects. \\
    \midrule
    \cite{abebe2021selection}    & Table 2, Table 3 (Column 1)                  & Original specification does not feature controls. Original estimates include strata fixed-effects.          \\
    \midrule
    \cite{hjort2021research}     &
      Online Appendix Table A.11: (1) &
      Original specification does not feature controls. Original estimate does not include strata fixed-effects. This is an intent-to-treat specification. \\
    \midrule
    \cite{romero2020outsourcing} &
      Table 3: (4) &
      Original specification does not feature controls. Original estimates include pair fixed-effects. These are intent-to-treat specifications. \\
    \midrule
    \cite{attanasio2020estimating} &
      Table 4: Second Column &
      Original specification features controls. Original estimate does not include strata fixed-effects. The first column of Table 4 is estimated using a probit regression and thus is not reproduced. \\
    \bottomrule
    \end{tabular}
    \end{adjustbox}
\begin{tablenotes} \footnotesize 
\item Notes: For each paper considered in Section \ref{sec:application-aer-aej}, we list the corresponding table/figure and specification(s) replicated in the second column. We include relevant notes for each application in the third column. 
\end{tablenotes}
\label{table:addl-details-empirical-applications}\end{table}%

\subsection{Details for Equation \eqref{eq:pfe}}\label{sec:FWL}
Let $\tilde \theta_n^{\rm drop}$ denote the OLS estimator of $\theta^{\rm drop}$ in \eqref{eq:pfe} using only observations with $R_i = 1$. By construction, the $j$th entry of the OLS estimator of the projection coefficient of $D_i$ on the pair fixed effects is given by
\begin{equation}\label{eq:projection}
\left ( \sum_{1 \leq i \leq n: R_i = 1} I \{i \in \{\pi(2j - 1), \pi(2j)\} \right )^{-1} \sum_{1 \leq i \leq n: R_i = 1} D_i I \{i \in \{\pi(2j - 1), \pi(2j)\}\}~. 
\end{equation}
Let $\tilde D_i$ denote the residual from the projection of $D_i$ on the pair fixed effects. Fix $1 \leq j \leq n$. If $R_{\pi(2j - 1)} = R_{\pi(2j)} = 1$, then it follows from \eqref{eq:projection} that 
\[ \tilde D_{\pi(2j)} = \frac{1}{2}\left(D_{\pi(2j)} - D_{\pi(2j - 1)}\right)~, \]
\[ \tilde D_{\pi(2j - 1)} = \frac{1}{2}\left(D_{\pi(2j - 1)} - D_{\pi(2j)}\right)~. \]
Next suppose the $j$th pair contains only one attrited unit. Without loss of generality, assume $R_{\pi(2j - 1)} = 0$ and $R_{\pi(2j)} = 1$. 
It then follows from \eqref{eq:projection} that 
\[ \tilde D_{\pi(2j)} = D_{\pi(2j)} - D_{\pi(2j)} = 0~. \]
By an application of the Frisch-Waugh-Lovell theorem we can thus conclude that $\tilde \theta_n^{\rm drop} = \hat \theta_n^{\rm drop}$, as desired.

\subsection{Relevant Excerpts from Referenced Sources}\label{sec:quotes}
\cite{donner2000design} chapter 3, page 40:

``A final disadvantage of the matched pair design is that the loss to follow-up of a single cluster in a pair implies that both clusters in that pair must effectively be discarded from the trial, at least with respect to testing the effect of intervention. This problem [...] clearly does not arise if there is some replication of clusters within each combination of intervention and stratum.''


\noindent \cite{king2007politically} page 490:

``The key additional advantage of the matched pair design from our perspective is that it enables us to protect ourselves, to a degree, from selection bias that could otherwise occur with the loss of clusters. In particular, if we lose a cluster for a reason related to one or more of the variables we matched on [...] then no bias would be induced for the remaining clusters. That is, whether we delete or impute the remaining member of the pair that suffered a loss of a cluster under these circumstances, the set of all remaining pairs in the study would still be as balanced—matched on observed background characteristics and randomized within pairs—as the original full data set. Thus, any variable we can measure and match on when creating pairs removes a potential for selection bias if later on we lose a cluster due to a reason related to that variable. [...] Classical randomization, which does not match on any variables, lacks this protective property.''

\noindent \cite{bruhn2009pursuit} page 209:

``King et al. (2007) emphasize one additional advantage in the context of social science experiments when the matched pairs occur at the level of a community, village, or school, which is that it provides partial protection against political interference or drop-out. If a unit drops out of the study [...] its pair unit can also be dropped from the study, while the set of remaining pairs will still be as balanced as the original dataset. In contrast, in a pure randomized experiment, if even one unit drops out, it is no longer guaranteed that the treatment and control groups are balanced, on average.''

\noindent \cite{glennerster2013running} chapter 4, page 159:

``In paired matching, for example, if we lose one of the units in the pair [...] and we include a dummy for the stratum, essentially we have to drop the other unit in the pair from the analysis. [...] Some evaluators have mistakenly seen this as an advantage of pairing [...] But in fact if we drop the pair we have just introduced even more attrition bias. [...] Our suggestion is that if there is a risk of attrition [...] use strata that have at least four units rather than pairwise randomization.''

\end{document}